\theoremstyle{plain}
\newtheorem{thm}{Theorem}
\newtheorem{prop}[thm]{Proposition}
\newtheorem{appr}{Approximation}
\theoremstyle{definition}
\newtheorem{prob}{Problem}
\theoremstyle{remark}
\newtheorem{rem}{Remark}
\newcommand{\EQ}[1]{\begin{displaymath}#1\end{displaymath}}
\newcommand{\EQN}[1]{\begin{equation}#1\end{equation}}
\newcommand{\eq}[1]{\begin{align*}#1\end{align*}}
\newcommand{\abs}[1]{\lVert#1\rVert}
\newcommand{\set}[1]{\left\{#1\right\}}
\newcommand{\SetIn}[1]{\mathds{1}_{\set{#1}}}
\newcommand{\ceil}[1]{\lceil#1\rceil}
\newcommand{\floor}[1]{\lfloor#1\rfloor}
\newcommand{\iid}{\emph{i.i.d.}\ }
\renewcommand{\le}{\leqslant}
\renewcommand{\ge}{\geqslant}
\newcommand{\bbC}{\mathbb{C}}
\newcommand{\E}{\mathbb{E}}
\newcommand{\R}{\mathbb{R}}
\newcommand{\Z}{\mathbb{Z}}
\newcommand{\sZ}{\mathscr{Z}}
\newcommand\copyrighttext{%
	\footnotesize \textcopyright 2021 IEEE. Personal use of this material is permitted. Permission from IEEE must be obtained for all other uses, in any current or future media, including reprinting/republishing this material for advertising or promotional purposes, creating new collective works, for resale or redistribution to servers or lists, or reuse of any copyrighted component of this work in other works. }
\newcommand\copyrightnotice{%
	\begin{tikzpicture}[remember picture,overlay]
	\node[anchor=south,yshift=10pt] at (current page.south) {\fbox{\parbox{\dimexpr\textwidth-\fboxsep-\fboxrule\relax}{\copyrighttext}}};
	\end{tikzpicture}%
}
\title{Latency-Redundancy Tradeoff in Distributed Read-Write Systems}
\author{
Saraswathy Ramanathan\IEEEauthorrefmark{1}%
\and Gaurav Gautam\IEEEauthorrefmark{1}%
\and Vikram Srinivasan\IEEEauthorrefmark{1}%
\and Parimal Parag\IEEEauthorrefmark{1}
\thanks{
The authors\IEEEauthorrefmark{1} are with Indian Institute of Science, Bangalore, KA 560012, India. 
Email: \IEEEauthorrefmark{1}\{saraswathyr,gauravgautam,vikramsriniv,parimal\}@iisc.ac.in.
}
}
\begin{document}
\maketitle
\copyrightnotice
\begin{abstract}
Data is replicated and stored redundantly over multiple servers for availability in distributed databases.  
We focus on databases with frequent reads and writes, where both read and write latencies are important. 
This is in contrast to databases designed primarily for either read or write applications. 
Redundancy has contrasting effects on read and write latency. 
Read latency can be reduced by potential parallel access from multiple servers, 
whereas write latency increases as a larger number of replicas have to be updated. 
We quantify this tradeoff between read and write latency as a function of redundancy, 
and provide a closed-form approximation when the request arrival is Poisson and the service is memoryless. 
We empirically show that this approximation is tight across all ranges of system parameters. 
Thus, we provide guidelines for redundancy selection in distributed databases.
\end{abstract}


\section{Introduction}
Storage systems are designed with specific applications in mind. 
In this article, we focus on the systems where read and write are both frequent, 
and we would refer to these as \emph{read-write systems}. 
Some examples of cloud systems with frequent reads and writes are banking, personal storage, e-commerce, etc.
Cloud storage systems like Dropbox, GitHub, OneDrive, Google Drive etc. have frequent updates (writes) to the same file and benefits from study of systems with frequent reads and writes.
Personal storage and banking receive read and write requests of the same order. 
In a personal storage cloud like Dropbox, the daily average of uploaded files is $1.2$ billion. 
Dropbox receives $1.67$ billion API calls in a day, of which $345.6$ million ($\sim21\%$) are edits to files~\footnote{\url{https://expandedramblings.com/index.php/Dropbox-statistics/}}. 
State Bank of India receives around $131.16$ million transactions per month\footnote{\url{https://www.business-standard.com/company/st-bk-of-india-1375/annual-report/director-report}} from $296.82$ million page visits\footnote{\url{https://www.similarweb.com/website/retail.onlinesbi.com/#overview}}. 
Out of the total number of queries sent to the bank cloud servers, 
$44\%$ are write requests.
We will focus on the latency performance of these systems, 
which is an important user requirement that has monetary impact. 


Distributed read-write systems are employed in many modern storage and computing architectures, 
for graceful scaling up. 
There are several important considerations in the design and implementation of such distributed systems, 
such as consistency, latency, availability, storage cost, among others. 
Availability in the event of failures, is ensured by redundant storage of data over multiple servers, in these systems. 

Most commercial distributed database systems provision for eventual consistency~\cite{DeCandiaSIGOPS2007,BailisMQ2013}, 
where read requests can access an older version of data. 
We consider the read and write latencies for eventually consistent systems. 
In addition, we adopt the primary-secondary architecture with redundant replication for distributed read-write systems as shown in Fig.~\ref{fig:MySQL}. 
This architecture is employed by popular databases such as MySQL, DynamoDB, MongoDB, PostgreSQL, etc.  
As shown in Fig.~\ref{fig:MySQL}, write requests arrive at the designated \emph{primary} server, 
and the remaining \emph{secondary} servers copy the written data from the primary.   
In contrast, multiple read requests to a file can be served simultaneously. 
Data read requests can be directed to any server in the cluster holding a copy of the data.
Often read and write requests are stored in separate queues, 
and depending on the application, one of them is prioritized over the other. 
We consider distributed read-write systems for the two different priority instances: read and write priority. 

\begin{figure}[h]
\centering
\scalebox{1.0}{\begin{tikzpicture}
[font=\fontsize{6.75pt}{7.5}\selectfont, line width=0.75pt, node distance=0mm, draw=black, >=stealth',
server/.style={cylinder, shape border rotate=90, aspect=0.35, minimum height=9mm, minimum width=10.5mm, draw=black, inner sep=0pt},
client/.style={rectangle, minimum height=6mm, minimum width=11.25mm, draw=black, inner sep=0pt},
router/.style={circle, minimum size=11mm, draw=black, fill=gray!40, text width=3.25em, text centered, inner sep=0pt},
system/.style={rectangle, rounded corners=1.5mm, minimum height=37.5mm, minimum width=62.25mm, draw=black, dashed}
]

\node[server](M) at (0,15mm) {Master};
\node[server](S1) at (-12.75mm,0) {Slave1};
\node[server](S2) at (0,0) {Slave2};
\node[server](S3) at (12.75mm,0) {Slave3};
\node[router](R) at (-30mm,7.5mm) {Dispatcher};
\node[system](MySQL) at (-8.625mm,5.25mm) {};

\node[client](C1) at (-52.5mm,15mm) {Clients};
\node[client](C2) at (-52.5mm,7.5mm) {Clients};
\node[client](C3) at (-52.5mm,0) {Clients};

\draw[->, color=black!50!red, densely dashdotted] ([xshift=-0.375mm]R.north) -- ([xshift=-24.9375mm,yshift=0.75mm]M.west) -- 
node[above, color=black!50!red] {Write} ([yshift=0.75mm]M.west);
\draw[->, color=black!50!red, densely dashdotted] (M.south) -- (S2.north);
\draw[->, color=black!50!red, densely dashdotted] ([yshift=3.75mm]S2.north) -- ([yshift=3.75mm]S1.north) -- (S1.north);
\draw[->, color=black!50!red, densely dashdotted] ([yshift=3.75mm]S2.north) -- ([yshift=3.75mm]S3.north) node[above, color=black!50!red] {Replication} -- (S3.north);

\draw[->, color=black!60!green] ([xshift=0.375mm]R.north) -- ([xshift=-24.1875mm]M.west) -- node[below, color=black!60!green] {Read} (M.west);
\draw[->, color=black!60!green] ([xshift=-0.75mm]R.south) -- ([xshift=-0.75mm,yshift=-10.5mm]R.south) -- 
node[below, color=black!60!green] {Read} ([yshift=-4.5mm]S3.south) -- (S3.south);
\draw[->, color=black!60!green] (R.south) -- ([yshift=-9.75mm]R.south) -- ([yshift=-3.75mm]S2.south) -- (S2.south);
\draw[->, color=black!60!green] ([xshift=0.75mm]R.south) -- ([xshift=0.75mm,yshift=-9mm]R.south) -- ([yshift=-3mm]S1.south) -- (S1.south);

\draw[->]  (C1.east) -- (R);
\draw[->]  (C2.east) -- node[left = 1.5mm, above, black] {R/W} (R);
\draw[->]  (C3.east) -- (R);

\end{tikzpicture}}
\caption{Distributed read-write system with primary-secondary architecture.}
\label{fig:MySQL}
\end{figure}
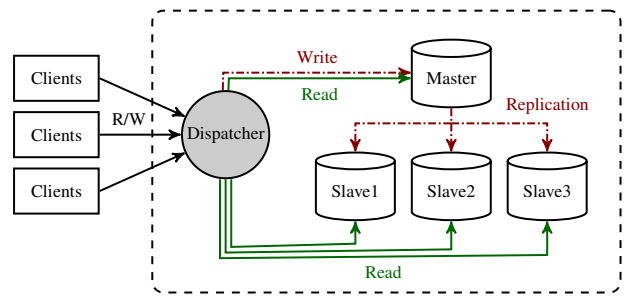

In this work, we are interested in I/O bound distributed database systems with geographically co-located data, 
where all the requests suffer from a similar negligible network delay. 
Accordingly, our focus is on the queueing delay suffered by the read and write requests. 
This is the scenario for many SaaS companies hosted on cloud service providers, 
with the database situated in one or two availability zones. 
Our analysis framework is also suitable for the cases when network latency does not scale with redundancy, 
and can be accounted for by an additional network latency. 
Redundant storage of data is advantageous for read latency,  
as it allows for parallel access. 
It can be shown\footnote{\url{https://docs.gitlab.com/ee/administration/database_load_balancing.html}}
that in many practical situations, 
the read latency decreases with increase in redundancy. 
Contrastingly, the write latency increases with redundancy~\cite{AbadiComp2012, ZhongINFOCOM2018}, 
as a write should be completed at all redundant copies of the data. 
This alludes to a tradeoff between read and write latency with increased redundancy, 
as illustrated in Fig.~\ref{fig:Tradeoff}, 
where we observe that there exists an optimal redundancy that minimizes the average request latency (averaged over all read and write requests) for a single file.
A quantitative characterization of read and write latency is the main objective of this article. 
We note that, the read and write latencies can be weighted depending on the application. 
For simplicity, we consider the case when they are equally weighted. 
\begin{figure}[h]
\centering
\scalebox{1.0}{\begin{tikzpicture}[scale=0.65]
\pgfplotsset{every axis ylabel/.append style={font=\large},
	xlabel/.append style={font=\large}}
\pgfplotsset{every tick label/.append style={font=\normalsize}}

\begin{axis}[
  xlabel={Number of servers},
  ylabel={Mean request latency (in s)},
  xmajorgrids,
  ymajorgrids,
  xmin=1,
  xmax=7,
  legend style = {legend pos = north east, nodes=right, font=\large},
]


\addplot[
color=blue,
line width=1.2pt,
every mark/.append style={solid},
]
coordinates {
(1,19.623125) (2,17.713125) (3,17.716875) (4,17.953125) (5,18.164375) (6,18.37875) (7,18.525)
};

\end{axis}
\end{tikzpicture}}
\caption{Empirical mean latency of requests in a distributed read-write system, with increasing number of servers.}
\label{fig:Tradeoff}
\end{figure}
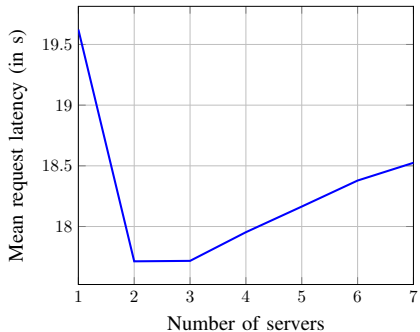

\subsection{Related Work} 
Redundancy schemes and request scheduling for read latency reduction in distributed storage systems has been studied in~\cite{HuangISIT2012, ShahArXiv2012, JoshiAllerton2012, ChenINFOCOM2014, LiangTNET2014, ShahTCOM2016, XiangTNET2016, ParagINFOCOM2017, Al-AbbasiINFOCOM2018}. 
Latency for a single class of requests with coded redundancy in distributed storage and compute systems has been studied in~\cite{HuangISIT2012, JoshiJSAC2014,GardnerSIGMETRICS2015,ShahTCOM2016,XiangTNET2016,Al-AbbasiTNET2018, Al-AbbasiTNSM2019}. 
Incoming requests are forked to all redundant servers, 
and a request is considered completed when a subset of servers finish completion. 
These systems are called \emph{fork-join} queues and have been studied for memoryless arrivals and service in~\cite{JoshiAllerton2012, JoshiJSAC2014, BaditaTIT2019, GardnerQUESTA2016, WangQUESTA2019}. 
For an eventually consistent read-write system with replication redundancy and instantaneous reads, 
staleness of reads is characterized in~\cite{ZhongINFOCOM2018}. 
All the above mentioned works focus on a single class of requests (either read or write), 
assuming either an instantaneous service for the other request class or immutability of data.   
In practical storage systems, however, the data is written and read with non-negligible workload from both. 
We focus on distributed systems where the servers are deployed within the same availability zone and thus have similar network latency for all requests. 
Thus, the network latency does not scale with redundancy and can be ignored, unlike~\cite{UluyolNSDI2020} that study latency in geo-distributed systems.

\subsection{Main contribution}

We analytically compute the read and write latency for a single file in a distributed read-write system, 
and obtain the optimal redundancy to minimize the aggregate latency.

\begin{enumerate}[(i)]
\item Unlike previous works on latency reduction, 
we are considering both read and write latency, without assuming one of them to be instantaneous.

\item We closely model real world distributed storage systems, 
allowing the read and write requests to have different queueing policies. 
In addition, we consider prioritized reads and writes.  

\item We characterize the read and write processes separately, 
while accounting for their joint presence.  
This allows the optimization problem to be tailored for different applications based on read and write latency constraints. 

\item Further, we provide approximations with closed-form expressions for latency redundancy trade-off, 
that can be used for large-scale system design. 
We remark that the Markov chain is more complex than previously studied fork-join queues, 
for read priority systems. 
These queues have not been analytically studied in the literature to the best of our knowledge. 
We empirically show that proposed approximations remain tight over the entire range of system parameters in the system stability region.  

\item As a consequence of our analysis, we show that the optimal number of servers depends on the traffic pattern in the system. 
Hence, from the system design perspective, it is not always beneficial to set redundancy factor to the typical value of two. 

\item We conducted numerical experiments for read-write systems with non-memoryless service distributions. 
In addition, we performed empirical experiments on real world read-write systems. 
We observed the existence of optimal redundancy in both of these situations, 
which confirms that the insights obtained from our theoretical studies continue to hold in general.
\end{enumerate}

\section{System model} 
We consider a distributed read-write system with a primary-secondary architecture. 
In such systems, write occurs at the primary first and then replicated at the secondary servers, whereas reads can occur from any server. 
For simplicity, we focus on a single file stored at the primary and $n$ secondary servers. 
However, the framework can be extended to study systems with multiple files as well. 

\subsection{Arrivals of read/write requests}
We assume that the read and the write requests for the file arrive as Poisson processes with rates $\lambda_r$ and $\lambda_w$ respectively. 
This is a widely accepted model for arrivals in distributed storage~\cite{JoshiJSAC2014,XiangTNET2016,LiPECS2018}
and caching systems~\cite{LiINFOCOM2018,QuanINFOCOM2019,ZhangArXiv2020}.
This assumption is motivated by analytical tractability, and the fact that this is a good approximation for the arrivals~\cite{IversenDTU2015}
when a large number of independent clients are reading from and writing to the system.
In the following subsections, we discuss the modeling assumption on read and write processes separately. 

\subsection{Dispatch of read/write requests} 
We can distinguish read and write request arrivals as two separate classes of arrivals. 
Note that all $(n+1)$ servers receive both read and write requests. 

\subsubsection{Read requests} 
We assume that incoming read requests are dispatched to one of the $(n+1)$ servers uniformly at random, independent of all other decisions.  
This is thinning of the Poisson process~\cite{ChiuWiley2013},
and hence the read request arrival at each server is a Poisson process with rate $\frac{\lambda_r}{(n+1)}$. 
We note that in a typical distributed read-write system with primary-secondary architecture, 
incoming read requests are directed to one of the $n+1$ servers in a round-robin fashion~\cite{MySQLDocumentation}.
We remark that even though optimal routing scheme would be to join the shortest queue or a variant, 
these schemes have communication overhead. 
Therefore, many practical systems employ round robin scheduling for simplicity.
This results in an effective arrival rate of ${\lambda_r}/{(n+1)}$ at each server, 
identical to the Poisson arrival rate achieved by the random splitting of Poisson process. 
Since the two arrival process only differ in higher order moments, we assume the random splitting 
for analytical tractability. 

\subsubsection{Write requests} 
In a primary-secondary architecture, the write request joins the write queue at the primary. 
After the write is completed at the primary, 
the request is forked to all $n$ secondary servers. 
A write is considered completed, if write request is completed at all $n$ secondary servers.

\subsection{Scheduling}  
We assume there is a priority order between read and write classes, 
and requests within a class are served in a first come first serve (FCFS) manner at each server. 

\subsubsection{Priority between classes}
In many distributed read-write systems, one class has priority over the other\footnote{\url{https://mariadb.com/kb/en/high_priority-and-low_priority/}}~\cite{MySQLDocumentation}.
In practical systems, priorities are non-preemptive~\cite{ShankerIGI2020}.
That is, if a request of higher priority class arrives when the server is serving a request of low priority class, 
then the higher priority request has to wait until the request in service is completed. 
For simplicity, we consider preemptive resume priority~\cite{PhillipsMPMOS1998, AtarAAP2001},
where a high priority arrival replaces a low priority request from service, 
and the low priority request resumes its service once there are no more high priority requests. 
This reduces the state space, and makes analysis tractable. 
We will treat the read and write priority systems separately, 
as the system evolution is different for both.

\subsubsection{FCFS within a class}
In practical systems, write requests are served in an FCFS manner~\cite{FraserThesis2004},
while the read requests are served by processor sharing\footnote{\url{https://www.sqlshack.com/locking-sql-server/}}~\cite{MySQLDocumentation}.
According to the insensitivity property~\cite{GiambeneSpringer2014},
all work conserving policies that do not depend on service time of requests have the same mean waiting time.
Since we are only interested in the expected behavior of the system for our analysis, 
we assume that both read and write requests are served in FCFS manner within their class. 

\subsection{Execution of read/write requests} 

The uncertainty in the execution time of the request at each server occurs due to independent background processes at individual servers, 
and hence the execution time can be modeled by independent random variables, both across the servers and the requests.  
We also assume that homogeneous servers such that each execution time has identical distribution, 
depending on the request class.  

\subsubsection{Read requests} 

Real traces from Amazon S3 show that the empirical distribution of time to retrieve a fixed size chunk of data can be well approximated by an exponential distribution~\cite{ChenINFOCOM2014}. 
This assumption makes the analysis tractable, 
and is a popular assumption for content download time in literature~\cite{HuangArXiv2012, HuangISIT2012, JoshiAllerton2012, JoshiJSAC2014, BaditaTIT2019}.
Accordingly, we assume that the read times are \iid across read requests and servers, 
distributed exponentially with rate $\mu_r$. 

\subsubsection{Write requests} 
It has been shown that the random write latency in distributed storage systems, 
can be well modeled by shifted exponential distribution~\cite{LiangINFOCOM2014, LiangTNET2014, ShahTCOM2016, ZhongINFOCOM2018, Behrouzi-FarICBD2019}. 
The shifted exponential distribution can be approximated to an exponential distribution when the constant shift is much smaller than the mean of the distribution. 
Since exponential distributions offer analytical tractability, 
we consider the case when the constant shift is negligible in empirical write distributions. 
Hence, we assume that the write times are \iid across write requests and servers, 
distributed exponentially with rate $\mu_w$. 

\subsection{Performance Metric}

For an $(n+1)$ server system, 
with arrival and service rate pairs $(\lambda_r, \mu_r)$ for read requests and $(\lambda_w, \mu_w)$ for write requests, 
we will measure the system performance by the limiting average of number of requests in the system. 
We denote aggregate read and write load on the system respectively by 
\begin{xalignat}{2}
\label{eqn:DefnLoad}
&\rho_r \triangleq {\lambda_r}/{\mu_r},&
&\rho_w \triangleq {\lambda_w}/{\mu_w}.
\end{xalignat} 
Since write requests join all $(n+1)$ servers, 
and read requests join one of the $(n+1)$ servers, 
the system is stable if $\rho_w + \rho_r/(n+1) < 1$. 
Let $M_n(t)$ denote the number of requests in the $(n+1)$-server system at time $t$, 
then the limiting average number of requests is 
\EQN{
\label{eqn:AvgNumRequests}
\bar{M}_n = \lim_{t \to \infty}\frac{1}{t}\int_{s=0}^tM_n(s)ds.
}

From Little's law~\cite{LittleIJOR1961},
we know that the limiting mean number of requests is directly proportional to the limiting mean sojourn time of requests in the system. 
Performance metric of choice here is the number of requests in the system, 
which is sum of the number of requests of individual classes. 
Implicitly, we have assumed that the read and write requests are of equal importance in our work. 
However, since we separately compute the number of read and write requests in the system, 
our framework can be used for any performance metric that is a function of the two numbers.  

We will show that the number of secondary servers $n$ is an important system parameter that controls the system performance. 
Specifically, we will show that under certain traffic and service parameters, 
there exists an optimal choice of number of secondary servers $n$, 
that minimizes the limiting average of number of requests in the system. 
Formally, we solve the following 
problem. 
\begin{prob}
\label{prob:OptRedundancy} 
For the distributed read-write system described above, 
find the optimal number of servers $n^\ast$ such that 
\EQ{
n^\ast = \arg\min_n\bar{M}_n.
}
In particular, we will find the optimal number of servers for read-first and write-first priorities. 
We will assume that the system is stable for all $n \in \Z_+$, i.e. $\rho_r +\rho_w < 1$. 
Further, we assume finite write load on the system, i.e. $\rho_w > 0$. 
\end{prob}

\section{Background}
\label{section:Background}
For our system model, read requests are easier to understand. 
Since read arrivals get routed to one of $(n+1)$ servers uniformly at random, 
the read requests queues would have remained independent if there were no write requests in the system. 
The write requests arrive at the primary, and 
are sent to all $n$ secondary servers at the instant of write completion at the primary. 
A request is considered completed, if it is completed at all $n$ servers. 
That is, a write request is forked to all $n$ secondary servers, and joins after service from all $n$ of them. 
This is precisely the setting of $(n,n)$ fork-join queues~\cite{BaccelliPCIS1985}.

In this section, we will just focus on the evolution of an $(n,n)$ fork-join system for a single class of requests. 
We study the impact of preceding primary server in Section~\ref{section:WritePriority} and~\ref{section:ReadPriority}. 
Specifically, we consider an $n$ server system,  
where request arrival is a Poisson process with rate $\lambda$, 
and the service time of each request at all servers is assumed to be \iid exponential with rate $\mu$. 
An arriving request is forked to all $n$ servers, 
and it leaves a server after service. 
Customers are served in an FCFS manner at each server. 

\subsection{Bounds using individual server queues}
\label{subsection:Bounds}
Let $X_j(t)$ be the number of  requests in the queue~$j$, 
then we make two observations. 
The evolution of each queue follows an $M/M/1$ queue
and the total number of requests in the system at time $t$ is given by $\max\set{X_j(t): j \in [n]}$. 
The limiting distribution of number of requests in each of the $n$ queues can be computed easily. 
If the queues were independent, then this also gives us the distribution of total number of requests in the system. 
However, since all queues get new requests at the identical arrival instant, 
they are not independent. 
Nevertheless, we do have bounds on the limiting mean number of requests. 
A simple lower bound is derived from the application of Jensen's inequality to convex function $\max$, 
and an upper bound is derived from the fact that $\max$ is upper bounded by the sum. 
That is, we have  
\EQ{
\max_{j \in [n]}\lim_{t \in \R_+}\E X_j(t) \le \lim_{t \in \R_+}\E\max_{j \in [n]}X_j(t) \le \sum_{j \in [n]}\lim_{t \in \R_+}\E X_j(t). 
}

Computational bounds on $(n,n)$ fork-join queuing system has been studied in~\cite{BaccelliADAP1989}, where the authors provide an upper and lower bound for $n$ server homogeneous systems with Poisson arrivals of rate $\lambda$ and \iid exponential service times with mean $1/\mu$.  
We cite these general bounds here, and will adapt them to write priority system in Section~\ref{subsection:WritePriorityLUB}. 
It was shown in~\cite{BaccelliADAP1989} that the lower bound on mean response time is achieved by $n$ parallel $D/M/1$ queues with deterministic periodic arrivals with mean inter-arrival time $1/\lambda$ and memoryless service time distribution identical to the original system.  
Applying the Little's law to the lower bound~\cite[Eq. 6.20]{BaccelliADAP1989} for exponential service times, 
\EQN{
\label{eqn:LowerBound}
\lim_{t\in\R_+}\E[\max_{j \in [n]} X_j(t)] \ge \frac{\lambda}{\mu(1-\eta)}H_n,
}
where  the Harmonic sum of first $n$ positive integers is $H_n \triangleq \sum_{i=1}^{n}\frac{1}{i}$,  
and $\eta$ is the smallest positive solution to the equation
\EQN{
\label{eqn:LBProb}
x = \exp(-\frac{\mu(1-x)}{\lambda}),\qquad x\ge 0.
}
The upper bound on mean response time for $(n,n)$ fork-join queue is achieved by $n$ independent $M/M/1$ queues with Poisson arrivals of rate $\lambda$ and \iid exponential service times with rate $\mu$. 
Applying Little's law to the upper bound~\cite[Eq. 6.29]{BaccelliADAP1989}, 
substituting the Laplace-Stieltjes transform of the distribution function of inter-arrival times as $A^\ast(s) = \frac{\lambda}{\lambda+s}$, 
and under unit loading for all arrivals, 
we get 
\EQN{
\label{eqn:UpperBound}
\lim_{t\in\R_+}\E[\max_{j \in [n]} X_j(t)]\le \frac{\lambda}{\mu(1-\delta)}H_n,
}
where the Harmonic sum $H_n = \sum_{i=1}^{n}\frac{1}{i}$ and $\delta = \frac{\lambda}{\mu}$ is the smallest positive solution to the equation
\EQN{
\label{eqn:UBProb}
x = A^\ast(\mu(1-x)) 
= \frac{\lambda}{\lambda+\mu(1-x)},
\qquad x\ge 0.
}
For exponential service times, the upper and lower bounds are of the same order in number of servers $n$, 
and only differ in the constant. 
We will see that the time to service write request in a read priority system with preemption is no longer memoryless, 
as one may have to service multiple read requests before one write request can be serviced.
In this case, computing the closed form for the write service time distribution is challenging. 
We further observe that the departure of write requests from primary server is no longer memoryless. 
Further, the write requests enter $(n,n)$ fork-join queue at $n$ secondary servers with general service times at each of them. 
Even if the arrival process to the secondary servers is approximated by a Poisson process, 
the lower bounding system for fork-join queues is $n$ independent $D/G/1$ queues, 
and the upper bounding system for fork-join queue is $n$ independent $M/G/1$ queues. 
The closed form computation of the mean of the maximum of $n$ such queues remains difficult, 
due to the difficulty of computing the closed form service time distribution of a write request. 

Therefore, we consider the alternative way of viewing fork-join system as tandem queues with pooled service as proposed in~\cite{BaditaTIT2019}. 
For fork-join queues with memoryless service, 
the authors of~\cite{BaditaTIT2019} proposed an independent approximation for the tandem queues that is shown to remain tight across all ranges of system parameters. 
We adapt this approach to study both write and read priority systems. 
We observe that under write priority system, the approximation matches the upper bound.  
We propose a new approximation for read priority system, 
under which the computed approximate mean number of write requests in the system is shown to be close to the empirical mean in the original system. 

\subsection{Tandem queue approach and approximation} 
\label{subsection:TandemQueue}
An alternative way of state representation of fork-join queues at any time $t$, 
is the sequence of set of servers that have served each request in the system~\cite{BaditaTIT2019}. 
Each incoming request is served in FCFS fashion at each server and is forked instantaneously to all $n$ servers. 
Therefore, the set of servers serving the newer requests are the ones that have already served the older requests. 
From the homogeneity in the system, it follows that the identities of servers do not matter, 
and the system state is sufficiently represented by the number of servers that have served each request in the system. 
From the FCFS service discipline, it follows that older requests are served by the number of servers no less than the newer requests. 
Therefore, one can partition all requests in the system by the number of servers that have served it.  
Accordingly, let $Y_i(t)$ denote the number of requests in the system that have been served by $i$ servers at time $t$, 
and denote $Y(t) \triangleq (Y_0(t), \dots, Y_{n-1}(t))$. 
It turns out that $(Y(t), t \ge 0)$ is a sufficient state representation for a fork-join queue. 
We call $Y_i(t)$ to be the number of requests in level~$i$. 
We observe that each incoming arrival is served by~$0$ servers, and hence this increases $Y_0(t) \to Y_0(t)+1$ at the arrival instant $t$. 
Further, when a request is served by all $n$ servers, it departs the system and decreases $Y_{n-1}(t) \to Y_{n-1}(t) - 1$ at the departure instant $t$.
When a request is served by $i < n$ servers, 
then it becomes a request with $i$ server completions from $(i-1)$ server completions. 
Correspondingly, we have $Y_{i-1}(t) \to Y_{i-1}(t)-1$ and $Y_i(t) \to Y_i(t)+1$ at this service completion instant $t$. 
That is, $Y(t)$ has a tandem queue interpretation with queues~$0$ to $(n-1)$ from left to right, 
with external arrivals to queue~$0$ and external departures from queue~$(n-1)$. 
A departure from queue~$(i-1)$ leads to an arrival to queue~$i$. 

For each partition of requests that have been served by $i$ servers, 
can only be served by remaining $(n-i)$ servers.  
At these $(n-1)$ server queues, there maybe requests that have been served by~$(i+1)$ or more servers.  
The requests with~$(i+1)$ or more completions are older than the requests with~$i$ completions, 
and hence are served first due to FCFS service discipline.  
We let $N_i(t)$ denote the number of servers at time $t$, 
whose head request has been served by exactly $i$ servers. 
That is, the requests in level~$i$ of the tandem queue, are served by $N_i(t)$ servers in parallel at time $t$. 
We have illustrated the tandem queue interpretation of the system in Figure~\ref{fig:TandemPooled},
where a request served by $i$ servers moves to level $i+1$ once it receives service from one of the $N_i(t)$ servers. 
\begin{figure}[h]
\centering
\begin{tikzpicture}
[font=\small, line width=1pt, node distance=0mm, draw=black, >=stealth',
server/.style={circle, minimum size=6mm, draw=black, fill=gray!40, inner sep=0pt},
cache/.style={rectangle, rounded corners=1mm, minimum height=6mm, minimum width=15mm, draw=none, inner sep=0pt}
]

\node[server, label=below:{$N_1(t)=1$}](serv1) at (0,0) {$\mu$};
\node[cache, left = of serv1] (cache1) {$Y_1(t)$};
\draw[rounded corners=1mm] (cache1.south west) -- (cache1.south east) -- (cache1.north east) -- (cache1.north west);

\node[server, left = 10mm of cache1, label=below:{$N_0(t)=1\SetIn{Y_1(t)>0}+2\SetIn{Y_1(t)=0}$}] (serv0) {$\mu$};
\node[cache, left = of serv0] (cache0) {$Y_0(t)$};
\draw[rounded corners=1mm] (cache0.south west) -- (cache0.south east) -- (cache0.north east) -- (cache0.north west);
  \draw (serv0) edge[->] (cache1);
  \draw ([xshift=-10mm]cache0.west) edge[->] node[above] {$\lambda$} (cache0);
  \draw (serv1) edge[->] ([xshift=10mm]serv1.east);
  \draw[dashed, ->, rounded corners=1mm] (serv1.north) -- ([yshift=5mm]serv1.north) -- ([yshift=5mm]serv0.north) -- (serv0.north);

\end{tikzpicture}
\caption{Write requests in the system represented as a tandem queue with pooled servers.}
\label{fig:TandemPooled}
\end{figure}
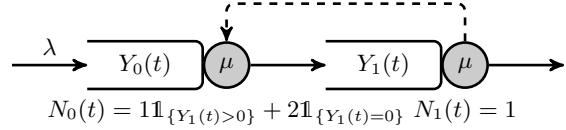
\begin{prop}
\label{prop:NumUsefulServers}
For the $(n,n)$ fork-join queuing system under consideration,  
the vector $Y(t)$ represents the occupancy of an $n$-tandem queue at time $t$.  
If $Y(t) = y$, then the number of servers serving the head request at $i$th tandem queue is 
denoted by $N_i(t) = N_i(y)$ such that 
\EQ{
N_i(y) = \begin{cases}
1, & i = n-1,\\
1 + N_{i+1}(y)\SetIn{y_{i+1} = 0}, &i < n-1.
\end{cases}
}
\end{prop}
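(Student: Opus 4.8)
The plan is to prove the formula for $N_i(y)$ by backward induction on $i$, starting from $i = n-1$ and decreasing to $i = 0$. The key idea is to track, for each tandem queue $i$, which of the $n$ physical servers currently have their head-of-line request sitting at level $i$. Because the system is FCFS at each physical server and requests are forked to all servers simultaneously, a physical server that has just finished all the work it can on the oldest requests will be working on progressively newer ones; in particular, a server whose head request is at level $i$ has already completed every request at levels $i+1, \dots, n-1$, so those higher levels must be empty at that server.

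First I would establish the base case: the oldest request in the system (the one with the most completions) occupies level $n-1$ if $y_{n-1} > 0$, and exactly one physical server is always the ``slowest'' server — the unique server that has served the fewest requests overall. That server's head request is the globally oldest request still needing service, which by definition of level $n-1$ is a request awaiting its final completion, so $N_{n-1}(y) = 1$. (If $y_{n-1}=0$ the queue is empty and $N_{n-1}$ is immaterial, or can be taken as $0$; I would state the convention cleanly so the recursion for $i = n-2$ still reads correctly.) Then for the inductive step, I would argue that queue $i$ is served by the one server whose head request is exactly at level $i$, \emph{plus}, in the case $y_{i+1} = 0$, all the servers that would otherwise be serving level $i+1$ — since with level $i+1$ empty those servers skip down to serve level $i$ requests instead. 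This gives precisely $N_i(y) = 1 + N_{i+1}(y)\SetIn{y_{i+1}=0}$.

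The main obstacle, and the step deserving the most care, is justifying the ``$+1$'' and the mutual exclusivity cleanly: I need to argue that there is always exactly one server whose head request is strictly at level $i$ (not above), and that the sets of servers counted by $N_i, N_{i+1}, \dots, N_{n-1}$ are determined by a clean ordering of servers by their number of completions. The cleanest way is to observe that at any time the $n$ servers can be totally ordered by how many of the system's requests each has completed (ties broken arbitrarily, but the counts themselves form a nonincreasing sequence), and a server's head request is at level $i$ precisely when it has completed all requests of levels $> i$ and $Y_i$ contributes a request it has not yet completed. Combining this ordering with the FCFS fork-join structure shows that the ``fastest'' $N_{i+1}(y) + \cdots$ servers cascade down into level $i$ exactly when the intervening levels vanish, while exactly one additional server sits at level $i$ because the requests at level $i$, being the newest among those not-yet-departed, are served by the single server lagging just behind the pack.

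Finally I would note that Proposition~\ref{prop:NumUsefulServers} is essentially a restatement of the tandem-queue/pooled-server picture already set up in Figure~\ref{fig:TandemPooled}, so once the server-ordering argument is in place the formula follows by unrolling the recursion; no computation beyond bookkeeping the indicator $\SetIn{y_{i+1}=0}$ is needed.
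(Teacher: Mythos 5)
Your argument is correct and takes essentially the same route as the paper's own proof sketch: the FCFS discipline together with simultaneous forking gives the nested (chain) structure of served-by sets, so exactly one server is dedicated to level~$i$, and the servers of level~$i+1$ cascade down to level~$i$ precisely when $y_{i+1}=0$, which is the recursion. One caution: when a level is empty the convention must be the one the proposition itself asserts, $N_{n-1}(y)=1$ (interpreted as the number of servers \emph{available} to that level), not the value $0$ you float as an alternative, since taking $N_{n-1}(y)=0$ when $y_{n-1}=0$ would make the recursion give $N_{n-2}(y)=1$ instead of the correct value $2$.
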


\begin{proof}
We just provide a proof sketch here, which is adapted from~\cite{BaditaTIT2019}.
Each request joins all $n$ queues, gets served in an FCFS manner at each of them, 
and leaves the system when it has been served by all $n$ servers. 
When a request gets serviced at a server, we say that the corresponding server has been observed by the request. 
Due to FCFS service policy, the subset of observed servers for each request in the system forms a chain~\cite{BaditaTIT2019},
i.e. the set of observed servers for an older request in the system contains the set of observed servers of newer request. 
Thus, we can aggregate all the requests with identical set of observed servers. 
Then the number of requests with identical set of $i$ observed servers is denoted by $Y_i(t)$ at time $t$. 
We note that because of FCFS service policy, only the oldest of the $Y_i$ requests is in service. 
Since requests leave after observing $n$ servers, we have $i\in \set{0, \dots, n-1}$ and the number of observed servers $i$ is referred to as level~$i$. 
One can see that $Y$ is the number of requests in a tandem queue with $n$ levels. 
Incoming requests have not observed any server and hence join level~$0$. 
When a request with $i$ observed servers, gets served by another useful server, 
it leaves level~$i$, and joins level~$i+1$. 
A request departs the system from level~$n-1$.  

For an $(n,n)$ fork-join system, the set of useful servers for a request is the set of all servers without the observed servers.  
The number of useful servers for the requests in level~$i$ is $n-i$. 
However, some of these servers are useful to the requests ahead of them as well, due to chain property of observed servers. 
The number of available servers for requests with $i$ observed servers is denoted by $N_i(t)$ at time $t$. 
For $i = n-1$, the number of useful and available servers remain same and equal to $1$, 
since requests with $n$ observed servers leave. 
A request with $i$ observed servers has one available server not useful to requests with $(i+1)$ observed servers, 
and it can use the available servers from the $(i+1)$th level, if there are no requests with $(i+1)$ observed servers. 
Since, this relation only depends on time $t$ through the occupancy vector $Y$, 
we get the result. 
\end{proof}

\begin{prop}
\label{prop:ForkJoinCTMC}
For the $(n,n)$ fork-join queuing system described above, 
the occupancy vector $(Y(t): t \in \R_+)$ forms a continuous-time Markov chain, with possible transitions 
\EQ{
a_i(y) \triangleq 
\begin{cases}
y +e_0,& i = 0,\\
y+e_i-e_{i-1}, & i \in [n-1], y_{i-1} > 0,\\
y - e_{n-1}, & i= n, y_{n-1} > 0,
\end{cases}
}
and the corresponding transition rates are 
\EQ{
Q(y,a_i(y)) = 
\lambda\SetIn{i=0} 
+ N_{i-1}(y)\mu\SetIn{y_{i-1} > 0}\SetIn{i \in [n]}.
}
\end{prop}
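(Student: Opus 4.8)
The plan is to exhibit $Y(t)$ as a process driven by a family of independent exponential clocks, and then to read off the generator by the standard competing-exponentials argument. The two ingredients are (i) memorylessness of every source of randomness in the system, and (ii) the fact, already established in Proposition~\ref{prop:NumUsefulServers}, that the server allocation is a deterministic function of the occupancy vector.

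First I would observe that, conditioned on the current configuration, the residual time to the next arrival is $\mathrm{Exp}(\lambda)$ and the residual service time at each busy server is $\mathrm{Exp}(\mu)$, independently and irrespective of elapsed times; this is just memorylessness of the Poisson arrivals and of the independent exponential service times. The only additional fact needed to conclude that $Y(t)$ alone is a sufficient statistic is that the component of $Y$ affected by a given service completion depends only on $Y(t)$. By the FCFS discipline and the chain (nesting) property of the observed-server sets, at each level $i$ only the head (oldest) request is in service, it is worked on in parallel by $N_i(Y(t))$ servers, and $N_i$ is the deterministic function of the occupancy vector given in Proposition~\ref{prop:NumUsefulServers}; homogeneity of the servers makes their labels irrelevant. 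Hence the future evolution of $Y$ depends on the past only through $Y(t)$.

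Next, fixing a state $y$, I would enumerate the competing clocks: one arrival clock of rate $\lambda$, and, for every level $i$ with $y_i > 0$, the $N_i(y)$ service clocks attached to the head request of level~$i$, each of rate $\mu$. An arrival injects a request with no observed servers into level~$0$, so $y \mapsto y + e_0 = a_0(y)$ at rate $\lambda$. A completion at one of the $N_i(y)$ servers serving the head of level~$i$ makes that request observed by $i+1$ servers: for $i < n-1$ it moves from level~$i$ to level~$i+1$, i.e. $y \mapsto y + e_{i+1} - e_i$, and for $i = n-1$ it has been observed by all $n$ servers and departs, i.e. $y \mapsto y - e_{n-1}$; since the $N_i(y)$ service clocks are \iid $\mathrm{Exp}(\mu)$, the first such completion occurs at rate $N_i(y)\mu$. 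Re-indexing by the target level so that the transition is $a_i(y)$, for $i \in [n-1]$ it requires $y_{i-1} > 0$ and has rate $N_{i-1}(y)\mu$, while $a_n(y)$ requires $y_{n-1} > 0$ and has rate $N_{n-1}(y)\mu = \mu$ because $N_{n-1}\equiv 1$; the factor $\SetIn{i\in[n]}$ in the claimed formula merely suppresses the meaningless $N_{-1}$ term when $i=0$. Since all clocks are continuous, no two events fire simultaneously, so these exhaust the transitions: the holding time in $y$ is exponential with parameter $\sum_i Q(y,a_i(y))$ and the jump lands in $a_i(y)$ with probability proportional to $Q(y,a_i(y))$, which is precisely a continuous-time Markov chain with the stated generator.

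The step I expect to require the most care is the reduction in the second paragraph — that tracking $Y(t)$ loses no information relative to tracking the full request-to-server assignment together with elapsed service times. It hinges on memorylessness combined with Proposition~\ref{prop:NumUsefulServers}, and on the observation that, although a single service completion may change $N_j(y)$ at several levels $j$ simultaneously (freeing a server can expose it to lower levels once some $y_j$ hits $0$, and populating level $i+1$ shrinks the pool serving level~$i$), it still relocates exactly one request between two adjacent levels, so the jump chain is unambiguous.
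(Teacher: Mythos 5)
Your proposal is correct and follows essentially the same route as the paper's own proof: both arguments rely on the memorylessness of the arrival and residual service times together with Proposition~\ref{prop:NumUsefulServers} (so that the pooled server counts $N_i(y)$ are deterministic functions of the occupancy vector), and then identify the next transition as the minimum of independent exponential clocks, with jump probabilities proportional to the rates, yielding the stated generator. Your extra care about the sufficiency of $Y$ and the re-indexing of levels is a slightly more explicit rendering of the same argument, not a different one.
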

\begin{proof}
In the proof of Proposition~\ref{prop:NumUsefulServers}, 
we observed that the occupancy vector $Y(t)$ has a tandem queue interpretation. 
We will show that the random time to next transition depends only on the current state, 
and has an exponential distribution. 
Further, conditioned on the current state, 
the jump probability of next transition is independent of the random transition time, and previous jump transitions. 
This shows that process $Y$ is a continuous-time Markov chain~\cite{RossWiley2008}.

Recall that there are three types of possible transitions from the current state $y$, 
namely an external Poisson arrival to level~$0$, 
a service for request with~$(i-1)$ observed servers that leads to departure of this request from level~$(i-1)$ and arrival to level~$i$ for $i < n$, 
and a service for request with $(n-1)$ observed servers that leads to an external departure from the system. 
Recall that the number of servers available to level~$(i-1)$ is $N_{i-1}(y)$, 
each server has an \iid exponential service time with rate $\mu$, 
and the inter-arrival times for external arrivals are \iid exponential with rate $\lambda$. 
Therefore, the residual times are independent and exponentially distributed.

The time for next transition is minimum of all these random times. 
Conditioned on the current state, next inter-transition time is an exponential random variable independent of the past, 
with rate given by the sum $\lambda + \mu\sum_{i=1}^nN_{i-1}(y)\SetIn{y_{i-1} > 0}$. 
Further, the probability that one of these transitions take place is given by the ratio of the transition rate and the sum-rate. 
This probability is independent of the inter-transition time and past transitions, given the current state $y$. 
\end{proof}
This Markov process is interpreted as a tandem queue, as shown in Figure~\ref{fig:TandemPooled}, 
where each level~$i$ has its own dedicated service rate $\mu$,
which gets pooled in the levels below when level $i$ is empty.
To compute the mean number of write requests, we need to find the invariant distribution of the Markov process $Y$. 
However, this problem 
remains intractable since it is equivalent to finding an eigenvector of an $n$-dimensional operator with eigenvalue unity. 
Further, the Markov process $Y$ is not reversible, and hence there are no known techniques to find the invariant distribution of this process. 
However, a tight reversible approximation of this process was proposed in~\cite{BaditaTIT2019}, 
which we quote here for reference. 
\begin{appr}
\label{appr:Rate} 
The pooled tandem queue $Y$ with dedicated rates $(\mu, \dots, \mu)$ is approximated by a continuous-time Markov process $\bar{Y} \triangleq (\bar{Y}(t) \in \Z_+^n, t \in \R_+)$, which is an unpooled tandem queue with Poisson arrival rate $\lambda$ and exponential service rates $(\bar{\gamma}_0, \dots, \bar{\gamma}_{n-1})$, 
where the service rate for level~$i$ in unpooled tandem queue is 
\EQN{
\label{eqn:ApproxTandemQueueRate}
\bar{\gamma}_i \triangleq (n-i)\mu - (n-i-1)\lambda.
}
\end{appr}

\section{Write priority system}
\label{section:WritePriority}
For systems with strict consistency requirements, write requests are prioritized over the read requests. 
Example of such systems are banking, financial services, e-commerce, military applications, etc. 
Let $R_j(t)$ denote the number of read requests at server $j \in \set{0, \dots, n}$ at time $t\in \R_+$, 
and $W(t)$ denote the number of unique write requests in the write priority system at time $t \in \R_+$. 
We will denote the stationary limit of the corresponding numbers by $R_j$ and $W$ respectively. 

Recall that write requests are first served by the primary server~$0$, and then forked to $n$ secondary servers. 
The write request is considered completed only when the request is written to all $(n+1)$ servers. 
Since write requests have preemptive priority, they are oblivious of the presence of read requests. 
As such, the number of write requests in such a system can be modeled by the number of requests in a system with single queue $W_0(t)$ followed by an $(n,n)$ fork-join queue.  
As seen in Section~\ref{section:Background}, 
an $(n,n)$ fork-join queue is equivalent to a sequence of $n$ pooled tandem queues denoted by $Y$. 
The number of write requests in the system is the sum of write requests in the primary queue and the 
$n$ tandem queues. 
We use Approximation~\ref{appr:Rate} to approximate the $n$ pooled tandem queues $Y$ by independent $n$ unpooled $M/M/1$ tandem queues $\bar{Y}$, 
and compute the mean number of write requests for the approximate system. 
For the approximate system, each queue~$i$ has a Poisson arrival of rate $\lambda$, service rate $\bar{\gamma}_i$, 
and we denote the stationary limit of the aggregate number of write requests in the system by $\bar{W} = W_0 + \sum_{i=0}^{n-1}\bar{Y}_i$. 
\begin{thm}
\label{thm:NumWriteRequests}
In terms of the harmonic sum $H_n \triangleq \sum_{i=1}^n\frac{1}{i}$, 
the mean number of write requests in the approximate system is
\EQ{
\E\bar{W}
= \frac{\rho_w}{1- \rho_w}(1 + H_n).
}
\end{thm}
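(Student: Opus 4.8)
The plan is to exploit the product-form structure of the approximate system, which is a tandem of independent $M/M/1$ queues, and reduce the computation to a telescoping harmonic sum.

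First I would handle the primary server. By construction $W_0$ is the occupancy of an $M/M/1$ queue with arrival rate $\lambda_w$ and service rate $\mu_w$; it is stable because $\rho_w = \lambda_w/\mu_w < 1$ (which follows from the standing assumption $\rho_r + \rho_w < 1$), and its stationary mean is $\E W_0 = \rho_w/(1-\rho_w)$. By Burke's theorem the stationary departure process of this $M/M/1$ queue is Poisson of rate $\lambda_w$, so the input to the first tandem queue is Poisson of rate $\lambda_w$, consistent with Approximation~\ref{appr:Rate}.

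Next I would compute $\E\bar{Y}_i$ for each level. Under Approximation~\ref{appr:Rate} the vector $\bar{Y}$ is a feed-forward (tandem Jackson) network in which queue~$i$ receives Poisson input of rate $\lambda_w$ and has exponential service rate $\bar{\gamma}_i = (n-i)\mu_w - (n-i-1)\lambda_w$. For such a network the stationary law is of product form and each marginal $\bar{Y}_i$ is distributed as the occupancy of an $M/M/1$ queue with load $\rho_i = \lambda_w/\bar{\gamma}_i$, so that $\E\bar{Y}_i = \lambda_w/(\bar{\gamma}_i - \lambda_w)$, provided $\bar{\gamma}_i > \lambda_w$. The key algebraic observation is $\bar{\gamma}_i - \lambda_w = (n-i)\mu_w - (n-i)\lambda_w = (n-i)(\mu_w - \lambda_w)$, which is positive for every $i \le n-1$ precisely because $\rho_w < 1$; dividing numerator and denominator by $\mu_w$ gives $\E\bar{Y}_i = \rho_w/\bigl((n-i)(1-\rho_w)\bigr)$.

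Finally I would sum the contributions. Adding the primary mean and reindexing $k = n-i$,
\begin{align*}
\E\bar{W} = \E W_0 + \sum_{i=0}^{n-1}\E\bar{Y}_i = \frac{\rho_w}{1-\rho_w} + \frac{\rho_w}{1-\rho_w}\sum_{k=1}^{n}\frac{1}{k} = \frac{\rho_w}{1-\rho_w}\,(1 + H_n).
\end{align*}
The routine parts (the $M/M/1$ mean formula, the telescoping of the harmonic sum) are immediate; the one structural ingredient beyond bookkeeping is the justification that the approximate tandem system has a product-form stationary distribution, so that each $\bar{Y}_i$ may be treated as an independent $M/M/1$ queue with through-rate $\lambda_w$ and service rate $\bar{\gamma}_i$ — this is where Burke's theorem and the Jackson-network argument enter. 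I would also note that the identity $\bar{\gamma}_i - \lambda_w = (n-i)(\mu_w - \lambda_w)$ simultaneously certifies stability of every level of the approximate chain.
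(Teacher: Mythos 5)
Your proof is correct and follows essentially the same route as the paper: decompose $\E\bar{W}$ into the primary $M/M/1$ mean plus the per-level means $\lambda_w/(\bar{\gamma}_i-\lambda_w)$ of the unpooled tandem queues from Approximation~\ref{appr:Rate}, then telescope via $\bar{\gamma}_i-\lambda_w=(n-i)(\mu_w-\lambda_w)$ to get the harmonic sum. The only difference is that you explicitly justify the per-level $M/M/1$ marginals via Burke's theorem and the product-form argument, whereas the paper builds this directly into the definition of the approximate system and states the result in one line.
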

\begin{proof}
Result follows from the linearity of expectation, 
the fact that the mean number of requests in an $M/M/1$ queue with arrival rate $\lambda$ and service rate $\bar{\gamma}_i$ is $\lambda/(\bar{\gamma}_i-\lambda)$, 
and the definition of $\bar{\gamma}_i$ in Eq.~\eqref{eqn:ApproxTandemQueueRate} for $\lambda = \lambda_w, \mu = \mu_w$. 
\end{proof}
\begin{rem}
\label{rem:writeRequests}
To show the explicit dependence of mean number of write requests in the approximate system, 
on the number of redundant secondary servers, 
we denote $f(n) \triangleq \E\bar{W}$. 
Recall that the harmonic sum $H_n = \sum_{i=1}^n\frac{1}{i}$ 
can be tightly approximated by its lower bound $\ln(n+1)$. 
Therefore, we can write the mean number of write requests in the approximate system as
\EQ{
f(n) \approx \frac{\rho_w}{1 - \rho_w}\ln e(n+1).
}
\end{rem}
We now focus on the mean sojourn time of lower priority read requests in the system. 
Recall that arrival processes of read and write requests are independent and Poisson with rates $\lambda_r$ and $\lambda_w$ respectively. 
Due to independent and uniform splitting, the arrival of read requests at a server~$j$ is Poisson with rate $\frac{\lambda_r}{(n+1)}$, 
and is independent of read arrival process at other servers.
Each write request is served by each of the $(n+1)$ servers, where an arriving write request joins server~$0$ and then upon service joins all remaining $n$ servers at once. 
Due to exponential service at server~$0$, the write arrival process at each server~$j$ is Poisson with rate $\lambda_w$,
and the arrivals at servers $(1, \dots, n)$ are dependent. 
Nevertheless, due to independence of read and write arrival processes, and due to priority of write service, the arrival of read and write requests at each server~$j$ remains independent. 

At each server~$j$, 
write requests have a priority over the read requests, and are served preemptively. 
We observe that the arrival process, service process, and scheduling of the read and write requests, 
is identical at each server. 
Hence, the evolution of queue occupancy at each server is identical at all $(n+1)$ servers.\footnote{The queue occupancy are not independent though, due to coupling of write request arrival instant at all secondary servers.} 
In particular, this implies that 
\EQN{
\label{eqn:MarginalMeanReadRequests}
\E R_0 = \dots = \E R_n. 
}
Since, the number of read requests in the system is the sum of read requests at the $(n+1)$ servers in the system, 
it follows from the linearity of expectation and Eq.~\eqref{eqn:MarginalMeanReadRequests}, 
that $\sum_{j=0}^n\E R_j = (n+1)\E R_0$. 

Therefore, we focus on the single server queue with two classes of independent arrivals, read and write, 
where write has preemptive priority over read. 
In particular, we are interested in finding the mean number of read requests at server~$0$ at stationarity. 
The service time of a read and write request is identically distributed to exponential random variables $S_r$ and $S_w$ with means ${1}/{\mu_r}$ and ${1}/{\mu_w}$ respectively. 
\begin{thm}
\label{thm:NumReadRequests}
The mean number of read requests at server~$0$ of the write priority system is given by 
\EQ{
\E R_0 
= \frac{\rho_r}{(n+1)(1-\rho_w)-\rho_r}\left(1 + \frac{\mu_r}{\mu_w}\frac{\rho_w}{(1-\rho_w)}\right).
}
\end{thm}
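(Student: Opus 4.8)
The plan is to model server~$0$ as an M/M/1 queue with two preemptive-priority classes and compute the mean number of low-priority (read) requests via the standard decomposition: $\E R_0$ equals the read arrival rate times the mean sojourn time of a tagged read request, which in turn splits into the read's own service requirement plus all the delay it accrues from write requests. First I would set $\lambda = \lambda_r/(n+1)$ as the effective read arrival rate at server~$0$ and note that writes see server~$0$ as a standalone M/M/1 queue with arrival rate $\lambda_w$ and service rate $\mu_w$ (writes are oblivious to reads by preemptive priority), so the write sub-system at server~$0$ is positive recurrent iff $\rho_w < 1$, and its stationary number of writes is $\rho_w/(1-\rho_w)$ with mean busy-period-type quantities all expressible through $\rho_w$ and $\mu_w$.

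Next I would invoke the classical preemptive-resume M/G/1 priority formula (here M/M/1 since both services are exponential) for the mean waiting-plus-service time of the low class. Two routes are available and I would take the one that matches the target closed form: (i) directly, the mean sojourn time of a read request is $\E T_r = \E[\tilde S_r]/(1 - \lambda \E[\tilde S_r])$ where $\tilde S_r$ is the ``completion time'' of a read request, i.e.\ its own exponential service $S_r$ inflated by all write work that preempts it; since writes arrive Poisson at rate $\lambda_w$ during an interval of expected length $1/\mu_r$ and each brings an expected delay $\E[B_w] = (1/\mu_w)/(1-\rho_w)$ of write busy period, one gets $\E[\tilde S_r] = \frac{1}{\mu_r}\big(1 + \frac{\lambda_w}{\mu_w(1-\rho_w)}\big) = \frac{1}{\mu_r}\cdot\frac{1}{1-\rho_w}$ after simplification, and then multiply by the $1/(1-\lambda \E[\tilde S_r])$ factor for queueing behind earlier reads. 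Then $\E R_0 = \lambda\, \E T_r$ by Little's law applied to the read class at server~$0$. Plugging $\lambda = \lambda_r/(n+1)$, $\rho_r = \lambda_r/\mu_r$, and clearing denominators should produce exactly
\EQ{
\E R_0 = \frac{\rho_r}{(n+1)(1-\rho_w)-\rho_r}\left(1 + \frac{\mu_r}{\mu_w}\frac{\rho_w}{1-\rho_w}\right),
}
where the leading fraction is the ``$M/M/1$ with inflated service'' occupancy and the parenthetical factor is the per-read service inflation $\mu_r \E[\tilde S_r]$ written out; I would double-check that the algebra reconciles the two appearances of $(1-\rho_w)$, since one comes from the completion-time inflation and one from the stability/queueing denominator.

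The main obstacle, and the step deserving the most care, is justifying that server~$0$ in the full $(n+1)$-server write-priority system behaves \emph{exactly} like an isolated two-class preemptive-priority M/M/1 queue, so that the textbook formula applies without approximation. This rests on: write arrivals to server~$0$ are the original external Poisson stream (server~$0$ is the primary, so no forking delay upstream of it); reads arrive as a thinned Poisson stream independent of writes; preemptive-resume priority makes the write process at server~$0$ autonomous; and exponential (memoryless) service means the residual read service upon resumption is again exponential$(\mu_r)$, so the completion-time argument is exact rather than a PASTA-type approximation. I would state these points explicitly, cite the preemptive-resume priority result (the paper already references \cite{PhillipsMPMOS1998, AtarAAP2001}), and remark that the coupling of write arrivals across secondary servers is irrelevant for the marginal law at server~$0$ — which is all that Eq.~\eqref{eqn:MarginalMeanReadRequests} needs. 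Everything after that is the routine computation of $\E[B_w]$, $\E[\tilde S_r]$, and the final algebraic simplification.
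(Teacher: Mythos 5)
There is a genuine gap, and it sits exactly at the step you flagged for a ``double-check'': the sojourn-time formula $\E T_r = \E[\tilde S_r]/(1-\lambda\,\E[\tilde S_r])$ does not hold here, and the algebra does not reconcile. Two problems combine. First, the completion time $\tilde S_r$ is not exponential, so modelling the read class as a single-class queue with service $\tilde S_r$ requires the Pollaczek--Khinchine/second-moment machinery; the formula $\E[\tilde S]/(1-\lambda\E[\tilde S])$ is exact only for exponential service. Second, and more decisively, folding write interruptions into completion times only captures writes that arrive \emph{during read service}; it omits the write work already present at server~$0$ when the tagged read arrives (and the busy periods it spawns). If you carry out your computation as written, $\E[\tilde S_r]=\frac{1}{\mu_r(1-\rho_w)}$ gives $\E T_r=\frac{1}{\mu_r\bigl(1-\rho_w-\frac{\rho_r}{n+1}\bigr)}$ and hence $\E R_0=\frac{\rho_r}{(n+1)(1-\rho_w)-\rho_r}$, i.e.\ the claimed result \emph{without} the parenthetical factor; your identification of that factor with $\mu_r\E[\tilde S_r]$ is also incorrect, since $\mu_r\E[\tilde S_r]=\frac{1}{1-\rho_w}\neq 1+\frac{\mu_r}{\mu_w}\frac{\rho_w}{1-\rho_w}$ unless $\mu_r=\mu_w$. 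The discrepancy is not a negligible constant: in the regime $\mu_r\gg\mu_w$ emphasized in the paper, the missing factor is large.

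The fix is to account for the write backlog seen at arrival. The paper does this with a short mean-value argument: by PASTA the tagged read finds $R_0$ reads and $W_0$ writes, so $\E T_r=\E S_r+\E R_0\,\E S_r+\E W_0\,\E S_w+\lambda_w\E T_r\,\E S_w$, with $\E W_0=\frac{\rho_w}{1-\rho_w}$ (writes see a plain $M/M/1$) and Little's law $\E T_r=\frac{n+1}{\lambda_r}\E R_0$; the term $\E W_0\,\E S_w$ is precisely what generates $\frac{\mu_r}{\mu_w}\frac{\rho_w}{1-\rho_w}$. Equivalently, the textbook preemptive-resume priority formula $\E T_r=\frac{\E S_r}{1-\rho_w}+\frac{\lambda_w\E[S_w^2]/2+\lambda\E[S_r^2]/2}{(1-\rho_w)\bigl(1-\rho_w-\frac{\rho_r}{n+1}\bigr)}$ with $\lambda=\frac{\lambda_r}{n+1}$ yields the stated expression. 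Your preliminary reduction --- that server~$0$ is an autonomous two-class preemptive-resume $M/M/1$ queue with Poisson write arrivals at rate $\lambda_w$ and thinned Poisson read arrivals, so only the marginal law at server~$0$ matters --- is sound and matches the paper's setup; it is the low-priority delay calculation that must be replaced.
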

\begin{proof}
Due to PASTA~\cite{WolffIJOR1982},
an incoming read request sees $W_0$ write and $R_0$ read requests in the system, at stationarity. 
This arrival needs $S_r$ amount of service, and stays $T_r$ amount of time in the system. 
During the stay duration $T_r$, the mean number of write requests that arrive in the system is given by $\lambda_w \E T_r$. 
These additional write requests get serviced ahead of the read request, due to preemptive priority. 
Hence, we can write  
\EQ{
\E T_r = \E S_r + \E R_0 \E S_r + \E W_0 \E S_w + \lambda_w\E T_r\E S_w.
}
From Little's law~\cite{LittleIJOR1961},
we have $\E T_r = \frac{(n+1)}{\lambda_r}\E R_0$. 
Further, for the write requests each server is just an $M/M/1$ queue, and hence $\E W_0 = \frac{\rho_w}{1-\rho_w}$. 
Applying these results to the above equation, and re-arranging terms, we get the result. 
\end{proof}
\begin{rem}
\label{rem:readRequests} 
We first note that the mean number of read requests in the system remains exact for preemptive priority, 
as opposed to the mean number of write requests computed for an approximate system. 
To show the explicit dependence of mean number of read requests on the number of redundant secondary servers, 
we denote $g(n) \triangleq (n+1)\E R_0$. 
We observe that the mean number of read requests in the write priority system is decreasing with number of redundant servers $n$. 
Defining $\alpha \triangleq \frac{\mu_r}{\mu_w}$, and approximating $\frac{1}{1-x}\approx 1+x$, 
we approximate
\EQ{
g(n) \approx \frac{\rho_r}{(1-\rho_w)}\Big(1+\alpha\frac{\rho_w}{1-\rho_w}\Big)\Big(1+\frac{\rho_r}{(1-\rho_w)(n+1)}\Big).
} 
\end{rem}
\subsection{Optimal number of servers}
Recall that latency is proportional to the number of requests in the system, 
and hence we find the optimal number of servers $n^\ast$ that minimizes the limiting mean number of requests in the system, 
as defined in Problem~\ref{prob:OptRedundancy}. 
\begin{thm}
\label{thm:existence}
For a stable read-write system under write priority, 
there exists an optimal number of redundant secondary servers $n^\ast$ defined in Problem~\ref{prob:OptRedundancy}. 
The optimal redundancy is zero if
\EQ{
\Big(\frac{\rho_r}{1-\rho_w}\Big)^2\Big(\frac{1-\rho_w}{\rho_w}+\alpha\Big)<1.
}
\end{thm}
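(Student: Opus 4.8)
The plan is to minimize the closed-form expression for $\bar M_n$ assembled in Section~\ref{section:WritePriority}, namely $\bar M_n = \E\bar W + (n+1)\E R_0$, where $\E\bar W = f(n)$ is given by Theorem~\ref{thm:NumWriteRequests} with Remark~\ref{rem:writeRequests}, and $(n+1)\E R_0 = g(n)$ by Theorem~\ref{thm:NumReadRequests} with Remark~\ref{rem:readRequests}. Inserting the approximations $H_n\approx 1+\ln(n+1)$ and $\frac{1}{1-x}\approx 1+x$ recorded there, and relaxing $n$ to a real variable $x\ge0$, I would work with
\[
F(x) \triangleq \frac{\rho_w}{1-\rho_w}\bigl(1+\ln(x+1)\bigr) + \frac{\rho_r}{1-\rho_w}\Bigl(1+\alpha\frac{\rho_w}{1-\rho_w}\Bigr)\Bigl(1+\frac{\rho_r}{(1-\rho_w)(x+1)}\Bigr),
\]
so that $\bar M_n\approx F(n)$ on $\Z_+$.

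\emph{Existence.} Here $f$ is strictly increasing with $f(n)\to\infty$ (this is where the assumption $\rho_w>0$ from Problem~\ref{prob:OptRedundancy} is used, so the logarithmic term actually grows), while $g$ is positive, decreasing, and bounded below by $\frac{\rho_r}{1-\rho_w}\bigl(1+\alpha\frac{\rho_w}{1-\rho_w}\bigr)$. Hence $\bar M_n\to\infty$, so $\set{n\in\Z_+:\bar M_n\le\bar M_0}$ is finite and nonempty, and the minimum of $\bar M_n$ over $\Z_+$ is attained at some $n^\ast$.

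\emph{Zero redundancy.} I would differentiate,
\[
F'(x) = \frac{1}{(1-\rho_w)(x+1)}\left[\rho_w - \frac{\rho_r^2}{1-\rho_w}\Bigl(1+\alpha\frac{\rho_w}{1-\rho_w}\Bigr)\frac{1}{x+1}\right],
\]
and note that the prefactor is positive on $x\ge0$ while the bracket is \emph{strictly increasing} in $x$, being $\rho_w$ minus a positive constant times $\frac{1}{x+1}$. Thus $F'$ changes sign at most once, from negative to positive, so $F'(0)\ge0$ already forces $F'\ge0$ on all of $[0,\infty)$; then $F$ is nondecreasing and minimized at $x=0$, giving $n^\ast=0$. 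It remains to rewrite $F'(0)>0$: multiplying $\rho_w > \frac{\rho_r^2}{1-\rho_w}\bigl(1+\alpha\frac{\rho_w}{1-\rho_w}\bigr)$ by $1-\rho_w$, using $1+\alpha\frac{\rho_w}{1-\rho_w} = \frac{(1-\rho_w)+\alpha\rho_w}{1-\rho_w}$, and dividing by $\rho_w(1-\rho_w)^2$, this becomes exactly $\bigl(\frac{\rho_r}{1-\rho_w}\bigr)^2\bigl(\frac{1-\rho_w}{\rho_w}+\alpha\bigr)<1$, the stated hypothesis.

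The main (indeed only non-routine) obstacle is the monotonicity observation about the bracket in $F'$: because it is monotone, a single inequality at $x=0$ certifies global monotonicity of $F$, and hence $n^\ast=0$. Everything else is algebra with the closed forms from the preceding theorems. A secondary point I would make explicit is that $\bar M_n$ here denotes the closed-form approximation used throughout Section~\ref{section:WritePriority} — exact in the read part, and via Approximation~\ref{appr:Rate} in the write part — since the exact fork-join invariant distribution is intractable.
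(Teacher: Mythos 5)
Your proposal is correct and takes essentially the same route as the paper: both work from the approximate closed forms $f(n)\approx \frac{\rho_w}{1-\rho_w}(1+\ln(n+1))$ and $g(n)$ recorded in the remarks, relax $n$ to a real variable, and differentiate the sum, with your condition $F'(0)>0$ being exactly the paper's condition $x^\ast<0$ for its unique critical point $x^\ast=(\rho_r(1+\nu))^2\bigl(\tfrac{1}{\nu}+\alpha\bigr)-1$, both reducing to the stated inequality. The only difference is presentational: you certify $n^\ast=0$ via the single sign change of $F'$ rather than by writing out $x^\ast$ explicitly, which is the same unimodality fact the paper invokes.
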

\begin{proof} 
Recall that, for an $(n+1)$ server read-write system, 
$f(n)$ denotes the mean number of write requests in the approximate system, 
and $g(n)$ denotes the mean number of read requests.  
We have established that $f$ is a logarithmically increasing function of $n$ in Remark~\ref{rem:writeRequests}. 
Further, we saw in Remark~\ref{rem:readRequests}, 
that $g$ is decreasing as $\frac{1}{n}$. 
This implies that, there always exists a unique minimum of the sum $f+g$. 
Treating $f(n)$ and $g(n)$ as evaluation of continuous functions at integer values, 
we can differentiate $f+g$ and find the unique real number $x^\ast$ such that $f'(x^\ast)+g'(x^\ast)=0$.
Denoting $\nu \triangleq \frac{\rho_w}{1-\rho_w}$, we can write $x^\ast 
=(\rho_r(1+\nu))^2\Big(\frac{1}{\nu} + \alpha \Big)-1.$
We verify that the mean number of requests increases with $n$ for $n>x^\ast$, 
and hence the optimal number of secondary servers is zero when $x^\ast < 0$.  
\end{proof}
\begin{rem}
When $x^\ast > 0$, 
we can find the optimal number of redundant servers $n^\ast$ by comparing mean number of requests $f+g$ at integer values $\ceil{x^\ast}$ and $\floor{x^\ast}$, i.e.
\EQ{
n^\ast \triangleq \arg\min\set{(f+g)(\floor{x^\ast}), (f+g)(\ceil{x^\ast})}. 
}
From the equation for $x^\ast$, we conclude that the optimal redundancy $n^\ast$ increases with read load $\rho_r$. 
Further, when the read service rate is larger than the write service rate\footnote{Typically, reads are faster than writes, and $\alpha = \frac{\mu_r}{\mu_w} > 1$.}, the optimal redundancy decreases with write load until write load $\rho_w$ is smaller than a threshold $\rho_T$,\footnote{The write load threshold is $\rho_T \triangleq \min\lbrace\frac{-3+\sqrt{9+8(\alpha-1)}}{4(\alpha-1)},1-\rho_r\rbrace$.} and increases thereafter. 
\end{rem}
\textbf{Design principle}:
Since the mean number of write requests grows logarithmically in $n$,
while the mean number of read requests decreases as ${1}/{n}$,
the right design choice for number of servers in a write priority system depends on the load and the service rates. 

\subsection{Lower and upper bound}
\label{subsection:WritePriorityLUB}
Under write priority, the departure process of a write request from the primary server is a Poisson process of rate $\lambda$ at stationarity.  
Subsequently, a write request joins an $(n,n)$ fork-join system of secondary servers, 
with Poisson arrivals of rate $\lambda$ and \iid memoryless rate $\mu$ service times. 
From Section~\ref{subsection:Bounds}, 
it follows that the mean number of requests at $n$ secondary servers is lower bounded by $\frac{\rho_w}{(1-\eta)}H_n$ where $\eta$ is the smallest positive solution for Eq.~\eqref{eqn:LBProb} with $\mu=\mu_w$ and $\lambda=\lambda_w$. 
Adding the mean number of read requests, 
the mean number of write requests at the primary server, 
and the lower bound on the mean number of write requests at $n$ secondary servers, 
we can lower bound the total number of requests in the system as
\EQ{
\E\bar{W} + (n+1)\E R_0\ge \rho_w\Big(\frac{1}{1-\rho_w} + \frac{H_n}{1-\eta}\Big) + g(n).
}
Substituting write Poisson arrival rate $\lambda=\lambda_w$ and exponential service rate $\mu=\mu_w$ in Eq.~\eqref{eqn:UBProb}, 
we obtain an upper bound $\frac{\rho_w}{(1-\rho_w)}H_n$ on the mean number of requests at $n$ secondary servers.  
Adding the mean number of read requests, 
the mean number of write requests at the primary server, 
and the upper bound on the mean number of write requests at $n$ secondary servers, 
we can upper bound the total number of requests in the system as
\EQ{
\E\bar{W} + (n+1)\E R_0\le \frac{\rho_w}{1-\rho_w}(1 + H_n) + g(n) = f(n)+g(n).
}
We note that  for the write priority system, 
the upper bound coincides with the approximation obtained earlier.

\section{Read priority system}
\label{section:ReadPriority} 

For systems where consumers are not sensitive to the timeliness of data, 
but sensitive to the read latency, read requests are prioritized. 
Examples of such systems are video streaming, cataloging, data mining, content management systems, etc. 
In this section, we analyze the distributed read-write systems with priority for read requests. 
Recall that incoming read requests are routed to one of the $(n+1)$ servers, uniformly at random.  
At any server in a read priority system, any write request in service is preempted by an incoming read request.  
Let $R_j(t)$ be the number of read requests at server~$j$ at time $t$. 
The evolution of $R_j(t)$ is equivalent to an $M/M/1$ queue with Poisson arrivals of rate ${\lambda_r}/{(n+1)}$ and exponential service of rate $\mu_r$. 
The read load on any server~$j$ is ${\rho_r}/{n+1}$. 
\begin{rem}
To show the explicit dependence of mean number of read requests on the number of redundant secondary servers, 
we denote this mean by $q(n)$ in the read priority system. 
Due to linearity of expectation and the evolution of $R_j(t)$ at all servers as identical $M/M/1$ queues,
we can write the mean number of read requests in the read priority system as 
\EQ{
q(n) \triangleq \sum_{j=0}^n\E R_j = (n+1)\E R_0 = \frac{(n+1)\rho_r}{n+1-\rho_r}.
}
We observe that the mean number of read requests in the read priority system is decreasing in number of redundant servers $n$, 
and saturates to the read load $\rho_r$.
\end{rem}

We next focus on the number of unique write requests in the system, 
denoted by $W(t)$ at time $t$. 
Recall that write requests initially join the primary server, 
and upon service completion join all the $n$ secondary servers instantaneously. 
A write request leaves the system, when it has been served by all the servers. 
Since a write request is preempted when an incoming read request arrives at that server, 
we cannot use Approximation~\ref{appr:Rate} for the write process as it is coupled with number of read requests in the system.  

Let $W_0(t)$ denote the number of write requests in the primary server at time $t$, 
then $W(t)-W_0(t)$ denotes the number of write requests forked at $n$ secondary servers. 
Service for forked write requests at server~$j$ depends on the number of read requests $R_j(t)$ at this server as well. 
For each write request $k \in [W(t)-W_0(t)]$ ordered by their arrival time, 
we define $S_k(t)\subset[n]$ to be the set of secondary servers that has already served this write request at time $t$. 
Note that the write request exits the system after being served by all $n$ secondary servers. 
Since the system follows FCFS policy, the later arrivals only get served by the servers that have already served the previous arrivals. 
In particular, we observe that the $(k+1)$th request cannot be served by a server before it has served the $k$th request.
Therefore, 
\EQN{
\label{eqn:structure}
\emptyset\subseteq S_{k+1}(t)\subseteq S_k(t)\subset [n],\quad k, k+1 \in [W(t)-W_0(t)].
}
We write the sequence of set of observed servers for all forked write requests in the system as $S(t) \triangleq (S_k(t): k \in [W(t)-W_0(t)])$, 
and we denote the number of read requests at $(n+1)$ servers by $R(t) \triangleq (R_j(t): 0 \le j \le n)$.  
Then, we can denote the state of the system at time $t \in \R_+$ by $Z(t) \triangleq (W_0(t), S(t), R(t)) \in \sZ \triangleq \Z_+\times [n]^\ast \times \Z_+^{\set{0, \dots, n}}$. 

\begin{thm}
\label{thm:WriteProcMC}
For a distributed read-write system with priority for read requests, 
the random process $(Z(t), t \in \R_+)$ forms a continuous-time Markov chain. 
For a state $z = (w_0, s, r) \in \sZ$, defining $s_k^j \triangleq (s_1, \dots, s_k\cup\set{j}, \dots)$ and $s^\prime \triangleq (s_2, \dots, s_{\abs{s}})$, 
the associated generator matrix $Q$ is given by 
\eq{
Q(z,z^\prime) &= \frac{\lambda_r}{n+1}\sum_{j=0}^n\SetIn{z^\prime = (w_0,s,r+e_j)} +\mu_r\sum_{j=0}^n\SetIn{z^\prime = (w_0,s,r-e_j)}\SetIn{r_j \ge 1} +\lambda_w\SetIn{z^\prime = (w_0+1,s,r)}\\ 
&+\mu_w\SetIn{z^\prime = (w_0-1,(s,\emptyset),r)}\SetIn{r_0=0}\SetIn{w_0 \ge 1}
+\mu_w \Bigg(\sum_{j: r_j=0}\sum_{k > 1}\SetIn{z^\prime = (w, s_k^j, r)}\SetIn{j \in s_{k-1}\setminus s_k}\\
&+\SetIn{j \notin s_1}\Big(\SetIn{z^\prime = (w, s_1^j, r)}\SetIn{\abs{s_1} < n-1}
+\SetIn{z^\prime = (w, s^\prime, r)}\SetIn{\abs{s_1} = n-1}\Big)\Bigg).
}

\end{thm}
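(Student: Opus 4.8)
The plan is to follow the same two-step template used in the proof of Proposition~\ref{prop:ForkJoinCTMC}: first show that, from any state $z = (w_0, s, r) \in \sZ$, the time to the next transition of $Z$ is exponentially distributed with a rate depending only on $z$, and then show that, conditioned on $z$, which transition occurs is independent of that time and of the past. Before that I would record why $Z(t)$ is a legitimate state: the $n$ secondary servers are statistically interchangeable (each sees an \iid thinned Poisson read stream and has \iid exponential read and write service times), so their identities carry no information; and by the FCFS discipline the observed-server sets of the forked write requests are nested as in \eqref{eqn:structure}, exactly as argued in Proposition~\ref{prop:NumUsefulServers}. Hence $(W_0(t), S(t), R(t))$ contains everything needed to describe the future, and it remains to list the transitions and their clocks.

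Next I would enumerate the transitions out of $z$, using the defining feature of read priority: a server works on its write backlog only while it holds no read request, i.e.\ server $j$ serves write only when $r_j = 0$. This yields five transition types — (i) a read arrival, thinned to rate $\lambda_r/(n+1)$ at each of the $n+1$ servers, $r \to r + e_j$; (ii) a read completion at server $j$ with $r_j \ge 1$, rate $\mu_r$, $r \to r - e_j$; (iii) a write arrival at the primary, rate $\lambda_w$, $w_0 \to w_0 + 1$; (iv) a write completion at the primary, possible only if $w_0 \ge 1$ and $r_0 = 0$, rate $\mu_w$, which removes the request from the primary and forks it to all $n$ secondaries as a new youngest entry with empty observed set, $(w_0, s) \to (w_0 - 1, (s, \emptyset))$; and (v) a write completion at a secondary $j$ with $r_j = 0$. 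The delicate point — and the main obstacle — is case (v): I must identify which forked write request server $j$ is currently serving. By FCFS together with the nesting \eqref{eqn:structure}, server $j$ serves the oldest forked request it has not yet observed, i.e.\ the unique $k$ with $j \notin s_k$ and either $k = 1$ or $j \in s_{k-1}$; this is exactly the indicator $\SetIn{j \in s_{k-1}\setminus s_k}$ for $k > 1$ and $\SetIn{j \notin s_1}$ for $k = 1$. On completion, $s_k \to s_k \cup \set{j}$ when request $k$ still has an unobserved secondary, while in the boundary case $k = 1$ with $\abs{s_1} = n - 1$ the addition of $j$ completes request~$1$ at all $n$ secondaries, so it departs and $s \to (s_2, \dots, s_{\abs{s}})$. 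I would then check that every one of these updates again satisfies \eqref{eqn:structure}, so the image state lies in $\sZ$ and the description is consistent.

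With the transitions listed, the Markov property is routine. The read inter-arrival clock (rate $\lambda_r$), the write inter-arrival clock (rate $\lambda_w$), each active read-service clock (rate $\mu_r$) and each active write-service clock (rate $\mu_w$) are exponential and, by the modelling assumptions, independent across servers and requests; hence, by the memoryless property, their residual times in state $z$ are again independent exponentials with the same rates. The next transition corresponds to whichever clock rings first, so the inter-transition time is a minimum of independent exponentials, which is exponential with rate equal to the sum of all currently active rates — $\lambda_r$, $\lambda_w$, one $\mu_r$ for each server with $r_j \ge 1$, one $\mu_w$ for the primary when $w_0 \ge 1$ and $r_0 = 0$, and one $\mu_w$ for each secondary that is read-free and still has an unobserved forked write request — a quantity depending only on $z$. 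Moreover the identity of the ringing clock is independent of that minimum and of the past given $z$ (competing exponentials, as in \cite{RossWiley2008}). Matching each per-clock rate with the corresponding state change described above produces exactly the generator $Q(z, z')$ in the statement, which completes the argument.
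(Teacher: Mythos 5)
Your proposal is correct and follows essentially the same route as the paper's proof: enumerate the transition types (read arrival/service, write arrival, write service at primary and at read-free secondaries), use the FCFS nesting property of Eq.~\eqref{eqn:structure} to identify which forked request a secondary serves and the departure case $\abs{s_1}=n-1$, and conclude via independent competing exponential clocks that $Z$ is a continuous-time Markov chain with the stated generator. The extra touches you add (explicitly justifying sufficiency of the state via server interchangeability, and checking the image state again satisfies the nesting) are consistent with, and implicit in, the paper's argument.
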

\begin{proof} 
We observe that system state can change if there is (a) an external arrival of read request, (b) a read request gets serviced, (c) an external arrival of write request, and (d) a write request gets serviced. 
Since all distributions are continuous and arrival and service times are independent, 
only one of these events take place in an infinitesimal time. 
We will first focus on state transitions due to read requests, 
since their evolution is easier to understand when they are prioritized. 

\textbf{(a) External arrival of a read request:} 
External read arrival at each server is an independent Poisson process of rate $\frac{\lambda_r}{n+1}$ .
Therefore, an external arrival of a read request to server $j \in [n]$ changes the state $(w_0,s,r) \to (w_0, s, r+e_j)$, 
and the inter-transition time for such transitions are independent for all $j \in [n]$ and distributed exponentially with rate $\frac{\lambda_r}{(n+1)}$. 

\textbf{(b) Service of a read request:}
Service time for read requests at each server is independent and memoryless with rate $\mu_r$, 
and hence the inter-transition time for transitions $(w_0, s,r) \to (w_0,s,r-e_j)$ are independent and memoryless with rate $\mu_r$ for all $j \in [n]$ such that $r_j \ge 1$. 

We next focus on the evolution of write requests in the system. 
Specifically, we look at the arrival and service of write requests. 

\textbf{(c) External arrival of a write request:}
External arrival of write requests is Poisson with rate $\lambda_w$, 
and the incoming write requests initially join the primary server. 
This leads to an increase in number of write requests $w_0$ at the primary. 
Thus, the inter-transition time for transitions $(w_0,s,r) \to (w_0+1,s,r)$ are independent and memoryless with rate $\lambda_w$. 

\textbf{(d) Service of a write request:} 
We first focus on service of an existing write request at primary server~$0$, 
which can only happen when there are no read requests at the primary server. 
This request departs from the primary server upon service completion, and is forked to all $n$ secondary servers. 
This request has not been served by any secondary servers at this instant. 
Since the service time for write requests are \iid and memoryless with rate $\mu_w$, 
the inter-transition time for transitions $(w_0,s,r) \to (w_0-1,(s, \emptyset),r)$ are independent and memoryless with rate $\mu_w$, and they occur when $r_0 = 0$ and $w_0 \ge 1$.

We next focus on servers $j \in [n]$ without any read requests, that can serve write requests.  
Recall that service time for write requests at each server is \iid and memoryless with rate $\mu_w$, 
and the sequence of set of secondary servers that have finished serving existing $\abs{s}$ write requests are $(s_1, \dots, s_{\abs{s}})$. 
From the monotonicity of these sets in Eq.~\eqref{eqn:structure} due to FCFS scheduling, 
server~$j$ can serve request $k \ge 2$, only if it has already served first $(k-1)$ requests, 
and hence $j \in s_{k-1}\setminus s_k$. 
This service leads to $k$th request getting served by server $j$, 
and it follows that the inter-transition time for transitions $(w_0,s,r) \to (w_0, s_k^j, r)$ are independent and memoryless with rate $\mu_w$, 
for all $j \in [n]$ such that $r_j = 0, j \in s_{k-1} \setminus s_k$, and $k \ge 2$. 
A server~$j$ can serve first existing request in the system, if it has not served it already and this service can lead to an external departure if $s_1\cup\set{j} = [n]$. 
It follows that the inter-transition time for transitions $z \to (w_0, s_1^j, r)$ and $z \to (w_0, (s_2, \dots, s_{\abs{s}}), r)$ are independent and memoryless with rates $\mu_w\SetIn{\abs{s_1}<n-1}$ and $\mu_w\SetIn{\abs{s_1}= n-1}$ respectively, 
for servers $j \in [n]$ such that $r_j = 0$ and $j \notin s_1$. 

Since each of these transitions are independent and memoryless, 
it follows that the process $(Z(t), t \in \R_+)$ is a continuous-time Markov chain~\cite{RossWiley2008},
and we have obtained the generator matrix for this Markov process. 
\end{proof}
\subsection{Approximate Markov Chain}
For the distributed read-write system with prioritized reads with service preemption, 
the number of read requests in the system $(R(t): t \in \R_+)$ forms a continuous-time Markov chain.
Recall that the evolution of $(R_j(t): t \in \R_+)$ is governed by an $M/M/1$ queue for all $j\in\{0,1,\dots,n\}$,
and the read load on each server is $\frac{\rho_r}{n+1}$.
Hence, the probability of a server $j$ not having any read request is given by $1-\frac{\rho_r}{n+1}$. 
Writing the number of write requests that have been served by $i$ secondary servers as $Y_i(t)$, 
we observe that 
\EQ{
Y_i(t) = \abs{\set{k \in [W(t)-W_0(t)]: \abs{S_k(t)} = i}}.
}
As seen in Eq.~\eqref{eqn:structure}, 
the sequence of observed servers $(S_k(t): k \in [W(t)-W_0(t)])$ for all write requests in the system are ordered by set inclusion.  
Hence, the write requests with $i$ service completions are served by the same set of servers. 
From Proposition~\ref{prop:NumUsefulServers}, 
it follows that $Y(t) \triangleq (Y_0(t), \dots, Y_{n-1}(t))$ is a pooled tandem queue, 
and evolves as a continuous-time Markov chain if there were no read requests in the system. 
This pooled tandem queue is approximated by an uncoupled tandem queue in Approximation~\ref{appr:Rate}, 
where the service rate of $i$th tandem queue is $\bar{\gamma}_i$.  
In read priority system, the evolution of $Y(t)$ also depends on the number of read requests in the system at each individual server, and the set of servers which are serving $i$th tandem queue. 

\begin{rem}
In read priority systems, 
a secondary server can only serve a write request if there are no read requests at this server. 
Since the probability of having zero read requests at a server is same for all servers, 
we can approximate the expected rate at which the $i$th stage of write tandem queue is being served as  $P\set{R_j=0}\bar{\gamma}_i = (1-\frac{\rho_r}{n+1})\bar{\gamma}_i$, where $\bar{\gamma}_i$ is defined in Eq.~\eqref{eqn:ApproxTandemQueueRate}.  
Similarly, the average service rate at the primary queue is $(1-\frac{\rho_r}{n+1})\mu_w$.
\end{rem}
That is, we will approximate the process $(Y(t): t \in \R_+)$ with process $(\bar{Y}(t): t \in \R_+)$, 
where the $\bar{Y}(t)$ is an uncoupled tandem queue with no read requests, 
and the service rates of each stage is multiplied by the probability of no read request at any server. 
\begin{appr}
\label{appr:WriteForkingReadPriority} 
For the distributed read-write system with prioritized preemptive reads, 
the number of write requests in the system can be modeled by a sequence of uncoupled tandem queues $(W_0(t), \bar{Y}_0(t), \dots, \bar{Y}_{n-1}(t): t \in \R_+)$ that are served at memoryless service rates $(\mu_0, \beta_0, \dots, \beta_{n-1})$.  
The first queue $W_0$ is served at rate $\mu_0 \triangleq \mu_w(1-\frac{\rho_r}{n+1})$ and the $i$th stage of tandem queue $Y_i$ is served at rate $\beta_i \triangleq \bar{\gamma}_i(1-\frac{\rho_r}{n+1})$, 
where $\bar{\gamma}_i$ is defined in Eq.~\eqref{eqn:ApproxTandemQueueRate} for $\lambda = \lambda_w, \mu = \mu_w$.  
\end{appr}

\begin{thm}
\label{thm:rPriorNumWriteRequests}
The mean number of write requests in the approximate system is
\EQ{
\E \bar{W} = \frac{\lambda_w}{\mu_0 - \lambda_w} + \sum_{i=1}^{n}\dfrac{\lambda_w}{\beta_{n-i} - \lambda_w}.
}
\end{thm}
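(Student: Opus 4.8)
The plan is to reduce the computation to a sum over $n+1$ independent $M/M/1$ queues, mirroring the proof of Theorem~\ref{thm:NumWriteRequests}. By Approximation~\ref{appr:WriteForkingReadPriority}, the write requests are modelled by the uncoupled tandem $(W_0(t),\bar Y_0(t),\dots,\bar Y_{n-1}(t))$ with external Poisson arrivals of rate $\lambda_w$ into $W_0$, memoryless service at rate $\mu_0$ at the primary stage, and memoryless service at rate $\beta_i$ at stage~$i$ of the $Y$-tandem. First I would invoke Burke's theorem: the departure process of a stationary $M/M/1$ queue is again Poisson of the same rate, so the arrival stream feeding each successive stage of the tandem is Poisson of rate $\lambda_w$. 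Since the stages are treated as mutually independent, each of the $n+1$ stages is, in isolation, an $M/M/1$ queue with arrival rate $\lambda_w$ and its own service rate.

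Next I would verify stability of every stage so that the stationary means are finite. The primary stage requires $\lambda_w<\mu_0=\mu_w(1-\tfrac{\rho_r}{n+1})$, i.e. $\rho_w+\tfrac{\rho_r}{n+1}<1$, which is implied by the standing assumption $\rho_r+\rho_w<1$. For the $Y$-stages, $\beta_i=\bigl((n-i)\mu_w-(n-i-1)\lambda_w\bigr)(1-\tfrac{\rho_r}{n+1})$, and since $\beta_i-\beta_{i+1}=(\mu_w-\lambda_w)(1-\tfrac{\rho_r}{n+1})>0$ under $\rho_w<1$, the binding constraint is at $i=n-1$, where $\beta_{n-1}=\mu_w(1-\tfrac{\rho_r}{n+1})$; this again reduces to the stability condition. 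Hence $\lambda_w<\beta_i$ for all $i\in\{0,\dots,n-1\}$.

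Finally, using the classical fact that the stationary mean occupancy of an $M/M/1$ queue with arrival rate $\lambda$ and service rate $\mu>\lambda$ is $\lambda/(\mu-\lambda)$, I obtain $\E W_0=\lambda_w/(\mu_0-\lambda_w)$ and $\E\bar Y_i=\lambda_w/(\beta_i-\lambda_w)$ for each $i$. By linearity of expectation, $\E\bar W=\E W_0+\sum_{i=0}^{n-1}\E\bar Y_i$, and re-indexing the sum via $i\mapsto n-i$ gives the claimed expression. The argument is essentially routine once Approximation~\ref{appr:WriteForkingReadPriority} is in place; the only genuine points to state carefully are that the approximate model makes the $n+1$ queues mutually independent (the word ``uncoupled''), which is precisely what licenses the stage-by-stage $M/M/1$ and Burke's theorem arguments, and that the stability region of the approximate tandem coincides with the system stability region $\rho_r+\rho_w<1$.
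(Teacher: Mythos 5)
Your proposal is correct and follows essentially the same route as the paper: the paper likewise treats each stage of the uncoupled tandem as an $M/M/1$ queue with arrival rate $\lambda_w$ and service rate $\beta_i$ (the primary at $\mu_0$), applies the standard mean $\lambda/(\mu-\lambda)$, and sums by linearity of expectation. Your additions (Burke's theorem for the Poisson input to each stage, the stability check reducing to $\rho_r+\rho_w<1$, and the re-indexing $i\mapsto n-i$) merely make explicit details the paper leaves implicit.
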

\begin{proof}
Each of the $n$ unpooled tandem queues is an $M/M/1$ queue with arrival rate $\lambda$ and service rate $\beta_i$.  
Therefore, the mean number of request in $i$th queue is $\frac{\lambda}{\beta_i - \lambda}$.
Since the number of write requests in the system is the sum of requests in the primary queue and the $n$ tandem queues, 
the result follows. 
\end{proof}

\begin{rem} 
As in previous section, 
to show the explicit dependence of mean number of write requests on the number of redundant secondary servers, 
we denote $p(n) \triangleq \E\bar{W}$,
in the approximate read-write system with read priority. 
The expression for the mean number of write requests in the approximate read priority system can be written in terms of the write load parameter $\nu\triangleq{\rho_w}/{(1-\rho_w)}$, 
the read load parameter $\Delta_n\triangleq 1-\frac{\rho_r\nu}{(n+1-\rho_r)}$, 
and the digamma function\footnote{Digamma function $\psi(.)$ is the derivative of the logarithm of gamma function, and is continuous and differentiable in $\R_+$. It has the property $\psi(z+1) = \psi(z) + \frac{1}{z}$~\cite{AbramowitzDover1965}.
} $\psi: \bbC \to \R$ as
\EQ{
p(n) = \frac{(n+1)\nu}{n+1-\rho_r(1+\nu)} + \frac{(n+1)\nu(\psi(\Delta_n+n)-\psi(\Delta_n))}{(n+1-\rho_r)}.
} 
\end{rem}
\begin{rem}
\label{rem:Unbounded} 
We observe that as the number of redundant servers $n$ increases, 
the mean number of read requests reaches a finite limit $\rho_r$, 
and the mean number of write requests grows to infinity. 
\end{rem}
\begin{thm}
\label{thm:RPexistence}
For a stable read-write system under read priority, 
there exists an optimal number of redundant secondary servers $n^\ast$ defined in Problem~\ref{prob:OptRedundancy}. 
The optimal redundancy is non-zero if
\EQ{
2-\frac{\rho_r}{1-\rho_r-\rho_w}<\frac{\rho_r^2}{1-\rho_r}\Big(\frac{1}{\rho_w}-\frac{2}{2-\rho_r}\Big).
}
\end{thm}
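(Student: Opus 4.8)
The plan is to follow the template of Theorem~\ref{thm:existence}. Write the (approximate) mean number of requests in the read priority system as $p(n)+q(n)$, where $q(n)=(n+1)\E R_0=\frac{(n+1)\rho_r}{n+1-\rho_r}$ is the exact mean number of read requests and $p(n)=\E\bar W$ is the mean number of write requests supplied by Theorem~\ref{thm:rPriorNumWriteRequests}. I will (i) show that $p+q$ attains its minimum over $\Z_+$, giving existence of $n^\ast$, and (ii) show that the inequality in the statement is exactly the condition $p(1)+q(1)<p(0)+q(0)$, which then rules out $n=0$ as a minimizer.

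For (i): each summand in Theorem~\ref{thm:rPriorNumWriteRequests} is the mean of a stable $M/M/1$ queue, because under the standing assumption $\rho_r+\rho_w<1$ we have $\mu_0=\mu_w(1-\tfrac{\rho_r}{n+1})\ge\mu_w(1-\rho_r)>\mu_w\rho_w=\lambda_w$, and since $\bar\gamma_i$ is non-increasing in $i$ with $\bar\gamma_{n-1}=\mu_w$, the same bound gives $\beta_i\ge\mu_0>\lambda_w$ for every $i$; hence $p(n)<\infty$ for all $n$. By Remark~\ref{rem:Unbounded}, $p(n)\to\infty$ as $n\to\infty$, while $q(n)$ stays bounded (it decreases to $\rho_r$), so $p(n)+q(n)\to\infty$. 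Thus there is $N$ with $p(n)+q(n)>p(0)+q(0)$ for all $n>N$, and a global minimizer of $p+q$ over $\Z_+$ is found among $0,\dots,N$; this is the sought $n^\ast$.

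For (ii): it suffices to prove $n^\ast\ne0$ whenever $p(1)+q(1)<p(0)+q(0)$, and then to verify that this inequality is exactly the displayed one. Evaluating the closed forms, $q(0)=\tfrac{\rho_r}{1-\rho_r}$ and $q(1)=\tfrac{2\rho_r}{2-\rho_r}$; in $p(n)$ the digamma difference $\psi(\Delta_n+n)-\psi(\Delta_n)$ vanishes at $n=0$, so with $\nu\triangleq\tfrac{\rho_w}{1-\rho_w}$ one gets $p(0)=\tfrac{\nu}{1-\rho_r(1+\nu)}=\tfrac{\rho_w}{1-\rho_r-\rho_w}$; at $n=1$ the recurrence $\psi(z+1)=\psi(z)+\tfrac1z$ gives $\psi(\Delta_1+1)-\psi(\Delta_1)=\tfrac1{\Delta_1}$, and since $(2-\rho_r)\Delta_1=2-\rho_r(1+\nu)$ the two summands defining $p(1)$ coincide, so $p(1)=\tfrac{4\nu}{2-\rho_r(1+\nu)}=\tfrac{4\rho_w}{2-\rho_r-2\rho_w}$. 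Substituting and collecting terms, $p(1)+q(1)<p(0)+q(0)$ is equivalent to
\[
\frac{\rho_w(2-3\rho_r-2\rho_w)}{(2-\rho_r-2\rho_w)(1-\rho_r-\rho_w)}<\frac{\rho_r^{2}}{(1-\rho_r)(2-\rho_r)}.
\]
Under $\rho_r+\rho_w<1$ and $\rho_w>0$, the quantities $2-\rho_r-2\rho_w$, $1-\rho_r-\rho_w$, $1-\rho_r$, $2-\rho_r$, $\rho_w$ are all strictly positive, so multiplying both sides by $\tfrac{2-\rho_r-2\rho_w}{\rho_w}$ preserves the direction and yields $\tfrac{2-3\rho_r-2\rho_w}{1-\rho_r-\rho_w}<\tfrac{\rho_r^{2}(2-\rho_r-2\rho_w)}{\rho_w(1-\rho_r)(2-\rho_r)}$, which is precisely the stated inequality once one notes $2-\tfrac{\rho_r}{1-\rho_r-\rho_w}=\tfrac{2-3\rho_r-2\rho_w}{1-\rho_r-\rho_w}$ and $\tfrac1{\rho_w}-\tfrac2{2-\rho_r}=\tfrac{2-\rho_r-2\rho_w}{\rho_w(2-\rho_r)}$. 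When this holds, $p(0)+q(0)$ strictly exceeds $p(1)+q(1)\ge\min_n(p+q)$, so the minimizer $n^\ast$ is non-zero.

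I expect the main obstacle to be the simplification of $p(1)$: recognizing that the digamma recurrence collapses $\psi(\Delta_1+1)-\psi(\Delta_1)$ to $1/\Delta_1$ and that $\Delta_1$ exactly cancels the $(2-\rho_r)$ factor, so that $p(1)$ takes the clean closed form above. After that, the reduction of $p(1)+q(1)<p(0)+q(0)$ to the displayed inequality is routine bookkeeping, but one must check that every factor used to clear denominators is strictly positive — this is exactly where the standing hypotheses $\rho_r+\rho_w<1$ and $\rho_w>0$ are used. (If one additionally wanted the condition to be necessary and not merely sufficient, one would argue that $p+q$ is unimodal in $n$, i.e.\ $p$ is discretely convex; that is not needed for the stated ``if'' direction.)
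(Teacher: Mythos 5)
Your proposal is correct and follows essentially the paper's route: the paper also obtains the stated inequality by rewriting the condition $\bar{M}_0 > \bar{M}_1$, i.e.\ $p(0)+q(0) > p(1)+q(1)$ (a computation it leaves implicit and which you carry out correctly, including the digamma collapse $\psi(\Delta_1+1)-\psi(\Delta_1)=1/\Delta_1$ and the positivity checks needed to clear denominators), and it likewise deduces existence from the unbounded growth of $p+q$ noted in Remark~\ref{rem:Unbounded}. The only difference is minor: you establish existence directly as a discrete minimization over $\Z_+$ (finiteness of each $p(n)$ plus divergence as $n\to\infty$), whereas the paper interpolates $p+q$ to $\R_+$ and uses the mean value theorem to exhibit a real stationary point $x^\ast$, which it then reuses in the subsequent remark to locate $n^\ast$ via $\floor{x^\ast}$ and $\ceil{x^\ast}$.
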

\begin{proof}
The mean number of requests $\bar{M}$ in the read priority system is $p+q$, a function of number of redundant servers $n$.  
Treating $p(n)$ and $q(n)$ as evaluations of continuous functions at integer values, 
we observe that $\bar{M} = p+q$ is continuous and differentiable in $\R_+$. 
The function $\bar{M}$ grows unboundedly with $n$, 
as pointed out in Remark~\ref{rem:Unbounded}. 
Thus, for all positive $\epsilon > 0$, there exists a positive integer $m>1$ such that $\bar{M}_m = \bar{M}_1 + \epsilon$. 
By mean value theorem~\cite{ComenetzWScientific2002},
we know that there exists a positive real $c_2\in(1,m)$
such that value of the derivative at $c_2$ is $\frac{\epsilon}{m-1}$.
Similarly, there exists a value $c_1\in(0,1)$ such that value of the derivative at $c_1$ is $\bar{M}_1-\bar{M}_0$. 
When $\bar{M}_0 > \bar{M}_1$, there exists a positive real $x^\ast\in(c_1,c_2)$
such that $\bar{M}'(x^\ast) = 0$. 
This follows from the fact that the value of derivative of $\bar{M}$ at $c_1$ is negative, 
the value of derivative of $\bar{M}$ at $c_2$ is positive, 
and the derivative $\bar{M}'$ is continuous in $\R_+$. 
The result follows from rewriting this condition $\bar{M}_0 > \bar{M}_1$.
\end{proof}
\begin{rem}
Since the mean number of requests $p+q$, has terms with digamma function,
we numerically find the minimum $x^\ast$ in its domain $\R_+$.
Further, the optimal number of redundant servers $n^\ast$ can be found by comparing mean number of requests $p+q$
at integer values $\ceil{x^\ast}$ and $\floor{x^\ast}$, i.e.
\EQ{
n^\ast \triangleq \arg\min\set{(p+q)(\floor{x^\ast}), (p+q)(\ceil{x^\ast})}.
}
\end{rem}
\textbf{Design Principle :}
We observe that $p(n)$ is lower bounded by $\nu\ln e(n+1)$ for all $n\in\Z_+$,
and hence the asymptotic growth of the mean number of write requests is at least logarithmic in $n$.
Further, the number of read requests decreases in the order of $\frac{1}{n}$ with $n$.
Hence, this opposing behaviour of $p(n)$ and $q(n)$ should be taken into account while designing the system.

\section{Numerical Studies}

We numerically simulate a distributed read-write system with primary secondary architecture in this section. 
The system is simulated with non-preemptive scheduling and round-robin routing of read requests, 
as opposed to the simplifying assumption of preemptive scheduling and random routing, 
which we used for analysis. 
We compare the optimal number of servers obtained analytically using our proposed approximation, 
to the one observed empirically in the system simulation.
In our simulation studies, we select the read service rate $\mu_r=10$ an order higher than the write service rate $\mu_w=1$. 
This is motivated by the observation that reads are typically faster than writes in practical systems~\cite{KabakusCIS2017}. 
We have plotted the optimal number of servers for write and read priority systems, 
in Fig.~\ref{fig:WPnOpt} and Fig.~\ref{fig:RPnOpt} respectively. 
For both priorities, 
we fix arrival rate of the read or write requests, and vary the other within 95\% of the stability region.
The dashed curve shows the optimal number obtained analytically for the approximate system, 
while the solid curve denotes the optimal number obtained empirically under the system simulation. 

\subsection{Write priority system}
\label{subsection:WPNumericalStudies}
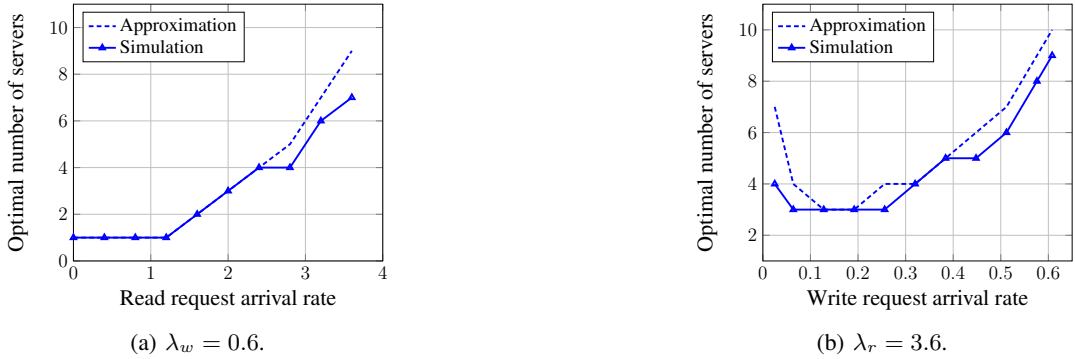
\begin{figure}[h]
\centering
\begin{subfigure}{.42\textwidth}
\centering
\begin{tikzpicture}[scale=0.6]
\pgfplotsset{every axis ylabel/.append style={font=\Large},
	xlabel/.append style={font=\Large}}
\pgfplotsset{every tick label/.append style={font=\large}}

\begin{axis}[
xlabel={Read request arrival rate},
ylabel={Optimal number of servers},
xmajorgrids,
ymajorgrids,
ymin=0,
ymax=11,
xmin=0,
xmax=4.0,
legend entries={Approximation, Simulation},
legend style = {legend pos = north west, nodes=right, font=\large},
]

\addplot[
color=blue,
line width=1.2pt,
densely dashed,
]
coordinates {
	(0.0,1) (0.4,1) (0.8,1) (1.2,1) (1.6,2) (2.0,3) (2.4,4) (2.8,5) (3.2,7) (3.6,9)
};

\addplot[
color=blue,
line width=1.2pt,
every mark/.append style={solid},
mark=triangle
]
coordinates {
	(0.0,1) (0.4,1) (0.8,1) (1.2,1) (1.6,2) (2.0,3) (2.4,4) (2.8,4) (3.2,6) (3.6,7)
};

\end{axis}
\end{tikzpicture}
\caption{$\lambda_w=0.6$.}
\label{fig:WPnOptVsReadAR}
\end{subfigure}%
\hspace{.075\textwidth}
\begin{subfigure}{.42\textwidth}
\centering
\begin{tikzpicture}[scale=0.6]
\pgfplotsset{every axis ylabel/.append style={font=\Large},
	xlabel/.append style={font=\Large}}
\pgfplotsset{every tick label/.append style={font=\large}}

\begin{axis}[
xlabel={Write request arrival rate},
ylabel={Optimal number of servers},
xmajorgrids,
ymajorgrids,
ymin=1,
ymax=11,
xmin=0,
xmax=0.65,
legend entries={Approximation, Simulation},
legend style = {legend pos = north west, nodes=right, font=\large},
]

\addplot[
color=blue,
line width=1.2pt,
densely dashed,
]
coordinates {
	(0.025,7) (0.064,4) (0.128,3) (0.192,3) (0.256,4) (0.32,4) (0.384,5) (0.448,6) (0.512,7) (0.576,9) (0.608,10)
};

\addplot[
color=blue,
line width=1.2pt,
every mark/.append style={solid},
mark=triangle
]
coordinates {
	(0.025,4) (0.064,3) (0.128,3) (0.192,3) (0.256,3) (0.32,4) (0.384,5) (0.448,5) (0.512,6) (0.576,8) (0.608,9)
};

\end{axis}
\end{tikzpicture}
\caption{$\lambda_r=3.6$.}
\label{fig:WPnOptVsWriteAR}
\end{subfigure}
\caption{The optimal number of servers as a function of request arrival rate in a write priority system. 
}
\label{fig:WPnOpt}
\end{figure}

We observe in Fig.~\ref{fig:WPnOptVsReadAR} that the optimal number of servers increase with read load while keeping the write load constant. 
This is due to the fact that write latency remains unaffected by the read load in write priority system, 
and decrease in read latency is faster than the increase in write latency due to number of servers. 
When we keep read load constant and increase the write load, 
write latency increases and read latency increases even faster. 
In this case, we observe an interesting phenomena in Fig.~\ref{fig:WPnOptVsWriteAR}, 
the optimal number of servers decreases until a write load threshold, and then increases. 

\subsection{Read priority system}
\label{subsection:RPNumericalStudies}

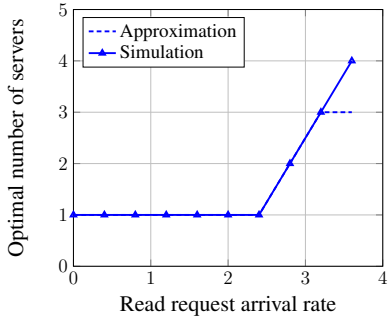
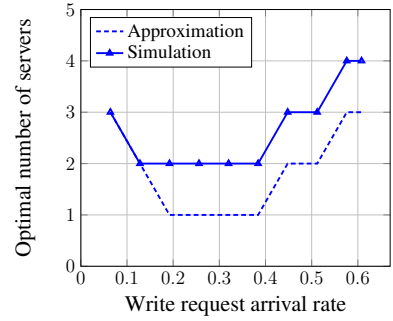
\begin{figure}[h]
\centering
\begin{subfigure}{.42\textwidth}
\centering
\begin{tikzpicture}[scale=0.6]
\pgfplotsset{every axis ylabel/.append style={font=\Large},
	xlabel/.append style={font=\Large}}
\pgfplotsset{every tick label/.append style={font=\large}}

\begin{axis}[
xlabel={Read request arrival rate},
ylabel={Optimal number of servers},
xmajorgrids,
ymajorgrids,
ymin=0,
ymax=5,
xmin=0,
xmax=4.0,
legend entries={Approximation, Simulation},
legend style = {legend pos = north west, nodes=right, font=\large},
]

\addplot[
color=blue,
line width=1.2pt,
densely dashed,
]
coordinates {
	(0.0,1) (0.4,1) (0.8,1) (1.2,1) (1.6,1) (2.0,1) (2.4,1) (2.8,2) (3.2,3) (3.6,3) 
};

\addplot[
color=blue,
line width=1.2pt,
every mark/.append style={solid},
mark=triangle
]
coordinates {
	(0.0,1) (0.4,1) (0.8,1) (1.2,1) (1.6,1) (2.0,1) (2.4,1) (2.8,2) (3.2,3) (3.6,4) 
};

\end{axis}
\end{tikzpicture}
\caption{$\lambda_w=0.6$.}
\label{fig:RPnOptVsReadAR}
\end{subfigure}%
\hspace{0.075\textwidth}
\begin{subfigure}{.42\textwidth}
\centering
\begin{tikzpicture}[scale=0.6]
\pgfplotsset{every axis ylabel/.append style={font=\Large},
	xlabel/.append style={font=\Large}}
\pgfplotsset{every tick label/.append style={font=\large}}

\begin{axis}[
xlabel={Write request arrival rate},
ylabel={Optimal number of servers},
xmajorgrids,
ymajorgrids,
ymin=0,
ymax=5,
xmin=0,
xmax=0.67,
legend entries={Approximation, Simulation},
legend style = {legend pos = north west, nodes=right, font=\large},
]

\addplot[
color=blue,
line width=1.2pt,
densely dashed,
]
coordinates {
	(0.064,3) (0.128,2) (0.192,1) (0.256,1) (0.32,1) (0.384,1) (0.448,2) (0.512,2) (0.576,3) (0.608,3)
};

\addplot[
color=blue,
line width=1.2pt,
every mark/.append style={solid},
mark=triangle
]
coordinates {
	(0.064,3) (0.128,2) (0.192,2) (0.256,2) (0.32,2) (0.384,2) (0.448,3) (0.512,3) (0.576,4) (0.608,4)
};

\end{axis}
\end{tikzpicture}
\caption{$\lambda_r=3.6$.}
\label{fig:RPnOptVsWriteAR}
\end{subfigure}
\caption{The optimal number of servers as a function of request arrival rate in a read priority system.}
\label{fig:RPnOpt}
\end{figure}

The behavior of optimal number of servers in read priority system remains similar to the write priority system. 
We observe in Fig.~\ref{fig:RPnOptVsReadAR} that the optimal number of servers increases with read load while keeping the write load constant. 
In this system, the write latency increases with read load. 
However, as the number of servers increases the dependence of write latency on read load reduces.  
Hence, the increase in number of servers decreases the read latency faster than increase in write latency. 
Whereas when the read load is fixed, 
we observe in Fig.~\ref{fig:RPnOptVsWriteAR} that the optimal number of servers decreases until a write load threshold, and then increases. 

It can be seen from Fig.~\ref{fig:WPnOpt} and~\ref{fig:RPnOpt} that the simulation results obtained from 
a system with practically applicable non-preemptive priority and round-robin routing of read requests,
remains close to the closed form approximation provided for simplified system with preemptive priority and random routing.

\subsection{Comparison of read and write priority}
In most practical distributed databases, 
the number of redundant servers $n$ is typically taken as two\footnote{\url{https://docs.mongodb.com/manual/core/replica-set-architectures/}}.
However, as we have observed for write priority system in Fig.~\ref{fig:WPnOptVsReadAR} and Fig.~\ref{fig:WPnOptVsWriteAR}, 
and for read priority system in Fig.~\ref{fig:RPnOptVsReadAR} and Fig.~\ref{fig:RPnOptVsWriteAR}, 
the optimal number of secondary servers $n$ can be different than two.  
The optimal redundancy can be different from two depending on the request arrival rate and service rate.
Hence, the typically chosen value of two redundant servers is not always optimal from latency perspective.
We have plotted the mean number of requests in the system for read and write priority systems with total number of servers in Fig.~\ref{fig:CompOptServ}, 
for two cases of read and write heavy systems. 
For the read heavy case, the read and write loads are fixed to be $\rho_r = 0.8$ and $\rho_w = 0.12$. 
For the write heavy case, the read and write loads are fixed to be $\rho_r = 0.12$ and $\rho_w = 0.8$.

\begin{figure}[h]
\centering
\begin{subfigure}{.42\textwidth}
\centering
\begin{tikzpicture}[scale=0.6]
\pgfplotsset{every axis ylabel/.append style={font=\Large},
	xlabel/.append style={font=\Large}}
\pgfplotsset{every tick label/.append style={font=\large}}

\begin{axis}[
name=WritePriority,
ylabel={ln(Mean number of requests)},
xlabel={Number of servers},
xmajorgrids,
ymin=3.55,
ymax=3.9,
xmin=1,
xmax=7,
colormap/blackwhite,
y tick label style={blue},
legend style = {legend pos = north east, nodes=right, font=\large},
]
\addplot[mark=*,blue, dashed, line width=1.2pt]
coordinates{
	(1,3.872016567063749) (2,3.601270852129078) (3,3.590551724395509) (4,3.5789588449372483) (5,3.5898357910785816) (6,3.6058265350150163)
};
\label{WriteHeavyWP}
\end{axis}

\begin{axis}[
name=ReadPriority,
axis y line*=right, axis x line=none,
ylabel={ln(Mean number of requests)},
xlabel={Number of servers},
xmajorgrids,
ymajorgrids,
ymin=2.39,
ymax=2.74,
xmin=1,
xmax=7,
colormap/blackwhite, 
y tick label style={black},
legend style = {legend pos = north east, nodes=right, font=\large},
]
\addlegendimage{/pgfplots/refstyle=WriteHeavyWP}\addlegendentry{Write Priority}
\addplot[mark=square,black, line width=1.2pt]
coordinates{
	(1,2.414010083206318) (2,2.523088822924899) (3,2.57031595610338) (4,2.6043446267362684) (5,2.637200912207317) (6,2.654450485393561)
};
\addlegendentry{Read Priority}
\end{axis}
\end{tikzpicture}
\caption{$\rho_w=0.8$ and $\rho_r=0.12$,}
\label{fig:CompLessOptServ}
\end{subfigure}%
\hspace{.075\textwidth}
\begin{subfigure}{.42\textwidth}
\centering
\begin{tikzpicture}[scale=0.6]
\pgfplotsset{every axis ylabel/.append style={font=\Large},
	xlabel/.append style={font=\Large}}
\pgfplotsset{every tick label/.append style={font=\large}}

\begin{axis}[
name=WritePriority,
ylabel={ln(Mean number of requests)},
xlabel={Number of servers},
xmajorgrids,
ymin=0.961,
ymax=1.031,
xmin=2,
xmax=17,
colormap/blackwhite,
y tick label style={blue},
legend style = {legend pos = north east, nodes=right, font=\large},
]
\addplot[mark=*,blue, dashed, line width=1.2pt]
coordinates{
	(4,1.0258811508642354) (5,0.9961786163256181) (6,0.9818042172933287) (7,0.974505292218307) (8,0.9709849022129217) (9,0.9696473826706538) (10,0.9684428318003947) (11,0.9685388223006646) (12,0.9691441838471195) (13,0.97052585864) (14,0.9723135935) (15,0.97443276965) (16,0.97517609727)
};
\label{ReadHeavyWP}
\end{axis}

\begin{axis}[
name=ReadPriority,
axis y line*=right, axis x line=none,
ylabel={ln(Mean number of requests)},
xlabel={Number of servers},
xmajorgrids,
ymajorgrids,
ymin=0.275,
ymax=0.345,
xmin=2,
xmax=17,
colormap/blackwhite, 
y tick label style={black},
legend style = {legend pos = north east, nodes=right, font=\large},
]
\addlegendimage{/pgfplots/refstyle=ReadHeavyWP}\addlegendentry{Write Priority}
\addplot[mark=square,black, line width=1.2pt]
coordinates{
	(3,0.33187815112953817) (4,0.2910443656909998) (5,0.28220652751813335) (6,0.28259860781972096) (7,0.29148131079440454) (8,0.2985448845238515) (9,0.30313201321406846)
};
\addlegendentry{Read Priority}
\end{axis}
\end{tikzpicture}
\caption{$\rho_w=0.12$ and $\rho_r=0.8$,}
\label{fig:CompMoreOptServ}
\end{subfigure}
\caption{The mean number of requests as a function of number of servers in the read and write priority system.}
\label{fig:CompOptServ}
\end{figure}
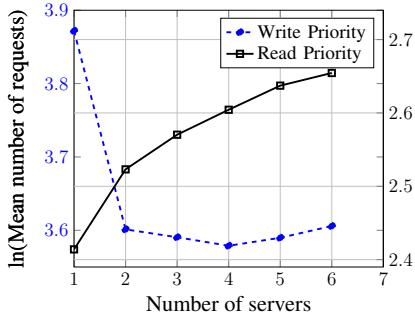
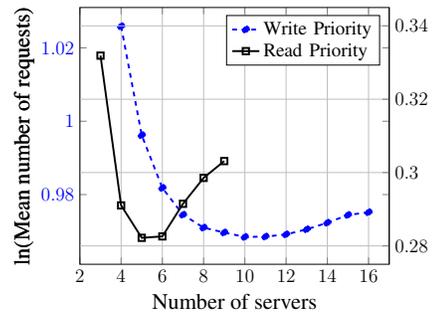

For write heavy system $\rho_w > \rho_r$, 
we observe that the optimal number of servers for read priority is unity. 
This follows from the fact that read load is already low, and the system latency is dominated by write load which gets worse with increasing number of redundant servers. 
However, for the same system with write priority, the optimal number of servers is larger than one. 
This is due to the following two facts. 
First, the read latency is affected by write load and resulting latency with write priority. 
Second, the increase in write latency due to increase in number of servers can be compensated by the decrease in read latency. 

For read heavy system $\rho_w < \rho_r$,  
we observe that the read priority system is highly sensitive to the choice of number of redundant servers.  
A small deviation from the optimal number can lead to significant changes in system latency. 
Whereas for write priority system in the same case, 
system latency is graceful to under or over provisioning of the number of redundant servers.

\subsection{Experimental Results}
\label{subsection:Exp}
\begin{figure}[h]
\centering
\begin{subfigure}{.42\textwidth}
\centering
\begin{tikzpicture}[scale=0.6]
\pgfplotsset{every axis ylabel/.append style={font=\Large},
    xlabel/.append style={font=\Large}}
\pgfplotsset{every tick label/.append style={font=\large}}

\begin{axis}[
xlabel={Number of servers},
ylabel={Mean number of requests},
xmajorgrids,
ymajorgrids,
ymin=0.95,
ymax=1.2,
xmin=1,
xmax=7,
legend entries={Experiment, Simulation},
legend style = {legend pos = north west, nodes=right, font=\large},
]

\addplot[
color=blue,
line width=1.2pt,
every mark/.append style={solid},
mark = *
]
coordinates {
     ( 1 , 1.1870236788325623 ) ( 2 , 1.1209315607241312 )  ( 3 , 1.0652315015350136 ) ( 4 , 1.0866816986704597 ) ( 5 , 1.0932372738297977 ) ( 6 , 1.0885733416664263 ) ( 7 , 1.0897919747772964 )   
};

%

\addplot[
color=blue,
line width=1.2pt,
densely dashed,
every mark/.append style={solid},
mark=triangle
]
coordinates {
     ( 1 , 0.9983396156163347 ) ( 2 , 0.9706662030910742 ) ( 3 , 0.9711213374417411 ) ( 4 , 0.9749929133668297 ) ( 5 , 0.9812034372246338 ) ( 6 , 0.9859738819667094 ) ( 7 , 0.9918474305088721 )

};

\end{axis}
\end{tikzpicture}
\caption{Write priority system with $\lambda_r=15$ and $\lambda_w=0.3$}
\label{fig:WPexp}
\end{subfigure}%
\hspace{.075\textwidth}
\begin{subfigure}{.42\textwidth}
\centering
\begin{tikzpicture}[scale=0.6]
\pgfplotsset{every axis ylabel/.append style={font=\Large},
    xlabel/.append style={font=\Large}}
\pgfplotsset{every tick label/.append style={font=\large}}

\begin{axis}[
xlabel={Number of servers},
ylabel={Mean number of requests},
xmajorgrids,
ymajorgrids,
ymin=0.53,
ymax=0.69,
xmin=1,
xmax=7,
legend entries={Experiment, Simulation},
legend style = {legend pos = north west, nodes=right, font=\large},
]

\addplot[
color=blue,
line width=1.2pt,
every mark/.append style={solid},
mark = *
]
coordinates {
 ( 1 , 0.711259575562535 )  ( 2 , 0.629920320197886 ) ( 3 , 0.6136389299009072 ) ( 4 , 0.5914708562722439 )  ( 5 , 0.5913639400378521 )  ( 6 , 0.5935076609087808 )     ( 7 , 0.608886177603269 )  
};


\addplot[
color=blue,
line width=1.2pt,
densely dashed,
every mark/.append style={solid},
mark=triangle
]
coordinates {
    
    ( 1 , 0.547461027758491 ) ( 2 , 0.5392430500505775 ) ( 3 , 0.5421576298009885 ) ( 4 , 0.5460796205513382 ) ( 5 , 0.5491643285306623 ) ( 6 , 0.5527105917386482 ) ( 7 , 0.5556737646514869 )

};

\end{axis}
\end{tikzpicture}
\caption{Read priority system with $\lambda_r=20$ and $\lambda_w=0.7$}
\label{fig:RPexp}
\end{subfigure}
\caption{
The mean number of requests as a function of number of servers in the read and write priority system obtained via experiment on $7$-server storage system.}
\label{fig:Exp}
\end{figure}
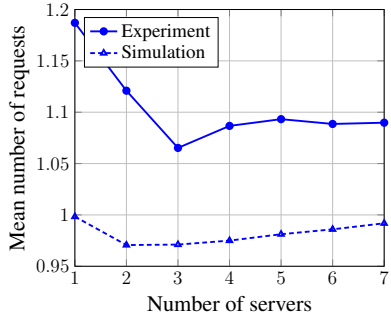
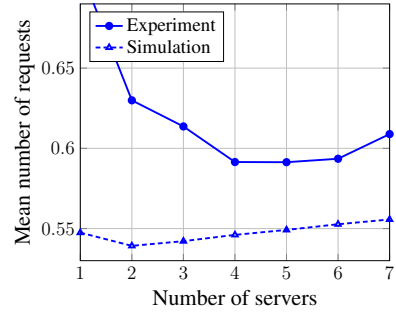

We perform experiments on a $7$-server distributed storage system with primary-secondary architecture and a single client server. 
Most database systems employ a random access memory (RAM) for serving client requests, 
while using a backend process to update the data into hard-disk periodically. 
This is due to the fact that read and write to cache is roughly $10$ times faster than read and write to storage disk.
Accordingly,  we read the file from the RAM of the storage system and wrote back to the RAM as well. 
While serving the read or write request to the specific file of interest, 
each of these servers had other background processes running (e.g. reads and writes to other files). 

The client server generates read and write request under Poisson arrival processes with rate $\lambda_r$ and $\lambda_w$ respectively. 
We empirically measure the service time distribution for read and write request to the RAM, 
by averaging it over all the seven servers. 
For read update for 150MB and write update of 200MB, 
we obtained the mean service time to read and write in RAM to be $0.021$s and $0.183$s respectively. 
We took the sum of number of requests in the system with service in progress or waiting for service, 
and obtained the mean number of requests by averaging them over time. 
We performed write priority experiment for a read arrival rate of $\lambda_r=15$ and write arrival rate of $\lambda_w=0.3$. 
We plot the mean number of requests with respect to the number of servers in Figure~\ref{fig:WPexp}. 
To compare, we simulated the system with the same arrival rate and empirically obtained service times distributions for read and write. 
We plot the mean number of requests in the read priority experiment and system simulation with identical distributions Figure~\ref{fig:RPexp}, 
with read arrival rate of $\lambda_r=20$ and write arrival rate of $\lambda_w=0.7$. 
Due to heterogeneity among servers, the read and write times at different servers do not have the same distribution. 
We observe in Figure~\ref{fig:Exp} that the latency curves obtained from experiment and the simulated system do not coincide, due to heterogeneity among servers and the additional system delays including network delays which were unaccounted in the simulations. 
However, even though the simulation and experiment do not match due to non-idealities in the system, 
we observe that there exists an optimal redundancy that minimizes the number of requests in the system.

\subsection{Read write systems with non-memoryless service distribution}
\begin{figure}[h]
\centering
\begin{subfigure}{.42\textwidth}
\centering
\begin{tikzpicture}[scale=0.6]
\pgfplotsset{every axis ylabel/.append style={font=\Large},
	xlabel/.append style={font=\Large}}
\pgfplotsset{every tick label/.append style={font=\large}}

\begin{axis}[
ylabel={Mean number of requests},
xlabel={Number of servers},
xmajorgrids,
ymajorgrids,
ymin=0.54,
ymax=0.59,
xmin=2,
xmax=7,
colormap/blackwhite,
legend entries={Empirical Distribution, Shited Exponential, Pareto, Weibull},
legend style = {legend pos = north west, nodes=right, font=\large},
]
\addplot[mark=*,blue, dashed, line width=1.2pt]
coordinates{
	(2,0.5547238832) (3,0.5466369748) (4,0.5457547086) (5,0.5462887209) (6,0.5481009661) (7,0.5495741394)
};

\addplot[every mark/.append style={solid},mark=square,black, loosely dashed, line width=1.2pt]
coordinates{
(2,0.5647617719) (3,0.5607229075) (4,0.5648201052) (5,0.56897213) (6,0.5720964689) (7,0.5758111264)
};

\addplot[every mark/.append style={solid}, mark=triangle,red, dotted, line width=1.2pt]
coordinates{
(2,0.5533997424) (3,0.5434513388) (4,0.5441000814) (5,0.5452300825) (6,0.5465479365) (7,0.5484581584)
};

\addplot[mark=diamond,green, solid, line width=1.2pt]
coordinates{
(2,0.5485615901) (3,0.543213552) (4,0.5448987879) (5,0.5470062578) (6,0.5490239504) (7,0.5510185706)
};

\end{axis}
\end{tikzpicture}
\caption{Write priority system with $\lambda_r=15$ and $\lambda_w=0.3$,}
\label{fig:WPdist}
\end{subfigure}%
\hspace{.075\textwidth}
\begin{subfigure}{.42\textwidth}
\centering
\begin{tikzpicture}[scale=0.6]
\pgfplotsset{every axis ylabel/.append style={font=\Large},
	xlabel/.append style={font=\Large}}
\pgfplotsset{every tick label/.append style={font=\large}}

\begin{axis}[
ylabel={Mean number of requests},
xlabel={Number of servers},
xmajorgrids,
ymajorgrids,
ymin=0.535,
ymax=0.585,
xmin=2,
xmax=7,
colormap/blackwhite,
legend entries={Empirical Distribution, Shited Exponential, Pareto, Weibull},
legend style = {legend pos = north west, nodes=right, font=\large},
]
\addplot[mark=*,blue, dashed, line width=1.2pt]
coordinates{
(2,0.5547238832) (3,0.5466369748) (4,0.5457547086) (5,0.5462887209) (6,0.5481009661) (7,0.5495741394)
};

\addplot[every mark/.append style={solid},mark=square,black, loosely dashed, line width=1.2pt]
coordinates{
	(2,0.5646819) (3,0.5588702977) (4,0.5611140271) (5,0.5641403598) (6,0.568321424) (7,0.5706532859)
};

\addplot[every mark/.append style={solid}, mark=triangle,red, dotted, line width=1.2pt]
coordinates{
	(2,0.5545863027) (3,0.5421814253) (4,0.5401967232) (5,0.5409136386) (6,0.5412098392) (7,0.5427523789)
};

\addplot[mark=diamond,green, solid, line width=1.2pt]
coordinates{
	(2,0.5498442519) (3,0.5430447961) (4,0.5441237012) (5,0.5448366871) (6,0.5467759938) (7,0.5484008682)
};

\end{axis}
\end{tikzpicture}
\caption{Read priority system with $\lambda_r=15$ and $\lambda_w=0.3$,}
\label{fig:RPdist}
\end{subfigure}
\caption{The mean number of requests as a function of number of servers in the read and write priority system under different distribution.}
\label{fig:DiffDist}
\end{figure}
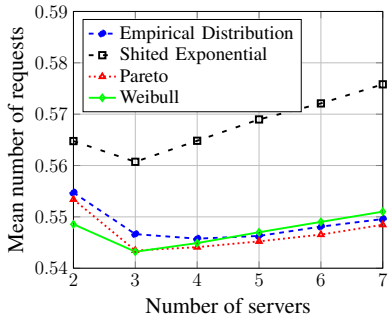
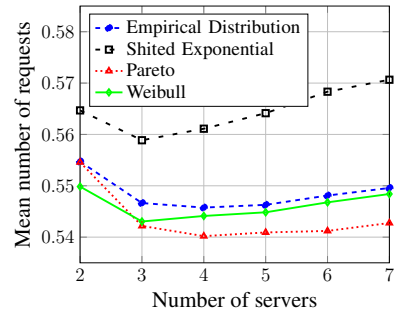

In this section, we study the effect of varying the service distribution on the optimal redundancy of the system.
To this end, ee plot the mean number of requests in a read-write system for the empirical distributions obtained from experiments in Section~\ref{subsection:Exp}, 
and for the closest fitting distributions from some parametrized families of non-memoryless distributions. 
Each of the read and write time distributions have a mean of $0.021$s and $0.183$s respectively.

The distribution function of a shifted exponential distribution with rate $\xi$ and shift $c$ can be represented as $F_{SE}(x) = (1 - e^{-\xi(x-c)})\SetIn{x\ge c}$.  
Fitting the empirical distribution with a shifted exponential distribution, 
we obtained the parameters $(\xi,c)=(136.096,0.015)$ for read service time and parameters $(\xi,c)=(12.43,0.105)$ for write service time.  
The distribution function for a pareto distribution with shape $\sigma$ and scale $x_m$ can be represented as
$F_{P}(x) = \Big(1 - \Big(\frac{x_m}{x}\Big)^\sigma\Big)\SetIn{x \ge x_m}$. 
Fitting the empirical distribution with a pareto distribution, 
we obtained the parameters $(\sigma,x_m)=(3.602,0.016)$ for read service time and parameters $(\sigma,x_m)=(5.203,0.15)$ for write service time.  
The distribution function for a weibull distribution with shape $k$, scale $\tau$ and location parameter $\theta$ can be represented as
$F_{W}(x) = 1 - e^{(\frac{x-\theta}{\tau})^k}\SetIn{x\ge \theta}$.
Fitting the empirical distribution with a weibull distribution, we obtain the parameters $(k,\tau,\theta)=(5.814,0.023,0)$ for read service time and parameters $(k,\tau,\theta)=(1.484,0.088,0.105)$ for write service time.
We simulate the system under different distributions for read arrival rate $\lambda_r = 15$ and write arrival rate $\lambda_w = 0.3$ under read and write priorities.
Plotting the mean number of requests with respect to the number of servers in Figure~\ref{fig:DiffDist},
we observe that there exists an optimal number of servers that minimises the mean number of requests under non-memoryless service as well.

\section{Conclusion and Future Directions}
We studied the latency-redundancy tradeoff in a distributed read-write system with Poisson arrivals and exponential service distribution. 
We provided novel closed-form approximations for mean number of write requests under read and write priorities. 
Under the proposed approximation, we characterized the optimal redundancy that minimizes the average request latency under read and write priorities.
We empirically showed that the optimal choice of redundancy under the proposed approximation closely follows the simulated result. 
We performed real world experiments and extensive numerical studies to demonstrate that the insights obtained from our theoretical study continue to hold true even in the real world settings and under non-memoryless service distributions. 

Our analysis framework can be extended to the study of multiple-file systems where each file is written to a subset of the servers. 
In such systems, read and write queues would themselves be multi-class queues where the request for different files can be considered as a separate class. 
Further, we are interested in finding the optimal redundancy for alternative system architectures and different consistency guarantees. 
Another interesting future direction would be to characterize the latency-redundancy tradeoff with general distribution for arrival and service processes.


\bibliographystyle{IEEEtran}
\bibliography{IEEEabrv,arXiv-2021}


\end{document}